\let\epsilon\varepsilon
\let\phi\varphi
\newenvironment{customclaim}[1]
  {\innercustomclaim}
  {\endinnercustomclaim}
\theoremstyle{definition}
\title{A Ramsey Theorem for Finite Monoids} 
\author{Isma\"el Jecker}{Institute of Science and Technology, Austria}{ismael.jecker@gmail.com}{}{}{}
\authorrunning{I. Jecker}
\keywords{Semigroup, monoid, idempotent, automaton}
\newcommand{\vari}{k}
\newcommand{\neutral}[1]{1_{#1}}
\newcommand{\maxSemigroup}[1]{H_{#1}}
\newcommand{\firstintegers}[1]{[1, #1 ]}
\newcommand{\transformation}[1]{T_{#1}}
\newcommand{\relation}[1]{B_{#1}}
\newcommand{\product}[1]{\pi(#1)}
\newcommand{\refrel}{\shortrightarrow}
\newcommand{\truevalR}[1]{\textsf{R}_{#1}}
\newcommand{\truevalRam}[2]{\textsf{R}_{#1}(#2)}
\newcommand{\reglength}[1]{L(#1)}
\newcommand{\idempower}[1]{{#1}^{\#}}
\newcommand{\calM}{M}
\newcommand{\calJ}{\mathcal{J}}
\newcommand{\calH}{\mathcal{H}}
\newcommand{\calD}{\mathcal{D}}
\newcommand{\calG}{\mathcal{G}}
\newcommand{\calS}{S}
\begin{document}

\maketitle

\begin{abstract}
Repeated idempotent elements are commonly used
to characterise iterable
behaviours in abstract models of computation.
Therefore, given a monoid $\calM$,
it is natural to ask how long a sequence of elements
of $\calM$ needs to be to ensure the presence
of consecutive idempotent factors.
This question is formalised through the notion
of the \emph{Ramsey function} $\truevalR{\calM}$
of a finite monoid $\calM$,
obtained by mapping every $\vari \in \mathbb{N}$
to the minimal integer $\truevalRam{\calM}{\vari}$
such that every word $u \in \calM^*$
of length $\truevalRam{\calM}{\vari}$ contains
$\vari$ consecutive non-empty factors
that correspond to the same idempotent element of $\calM$.

In this work,
we study the behaviour of the Ramsey function
$\truevalR{\calM}$
by investigating the
\emph{regular $\calD$-length} of $\calM$,
defined as the largest size $\reglength{\calM}$ of a submonoid of $\calM$
isomorphic to
the set of natural numbers $\{1,2, \ldots, \reglength{\calM}\}$
equipped with the $\max$ operation.
We show that the regular $\calD$-length of $\calM$
determines the degree of
$\truevalR{\calM}$,
by proving that
$\vari^{\reglength{\calM}} \leq
\truevalRam{\calM}{\vari} \leq
(\vari|\calM|^{4})^{\reglength{\calM}}$.

To allow applications of this result,
we provide the value of the regular $\calD$-length of diverse monoids. 
In particular,
we prove that
the full monoid of $n \times n$ Boolean matrices,
which is used to express transition monoids of non-deterministic automata,
has a regular $\calD$-length
of $\frac{n^2+n+2}{2}$.
\end{abstract}

\section{Introduction}
The algebraic approach to language theory
was initiated by
Sch\"utzenberger
with the definition of the
syntactic monoid associated to a
formal language~\cite{schutzenberger1955theorie}.
This led to several parallels being drawn between
classes of languages and varieties of monoids,
the most famous being that
rational languages are characterised by finite
syntactic monoids~\cite{RabinS59},
and that star-free languages are characterised by
finite aperiodic syntactic monoids~\cite{schutzenberger1965finite}.
These characterisations motivate the study of finite monoids
as a way to gain some insight about automata.
In this work, we focus on the following problem:
\smallskip
\noindent
\begin{adjustwidth}{3pt}{3pt}
\emph{Given a finite monoid $\calM$ and $\vari \in \mathbb{N}$,
what is the minimal integer $\truevalRam{\calM}{\vari}$
such that every word $u \in \calM^*$ of length $\truevalRam{\calM}{\vari}$
contains $\vari$ consecutive factors
corresponding to the same idempotent element of $\calM$?}
\end{adjustwidth}
\smallskip
The interest of this problem lies in the fact that
when we model the behaviours of an abstract machine as elements of a monoid,
repeated idempotent factors often characterise
the behaviours that
have good properties with respect to iteration.
This can be used, for instance, to obtain pumping lemmas,
as seen in~\cite{MazowieckiR18} for weighted automata.

A partial answer to this problem is obtained by using
Ramsey's Theorem~\cite{ramsey2009problem}
or Simon's Factorisation Forest Theorem~\cite{Simon90}
(these techniques are detailed in Appendix \ref{subsec:comparison}),
as both approaches provide upper bounds for
$\truevalRam{\calM}{\vari}$.
However, neither approximation is precise:
Ramsey's theorem disregards the monoid structure,
and the Factorisation Forest Theorem
guarantees much more than
what is required here.
We prove a version of Ramsey's Theorem
adapted to monoids,
or, equivalently,
a weaker version of the Forest Factorisation Theorem,
that yields an improved bound relying
on a parameter of monoids called the regular $\calD$-length.
We now present some examples,
followed with an overview of the main concepts studied in this paper:
the Ramsey function associated to a monoid
and the regular $\calD$-length.

\subsection{Examples}
We describe three families of monoids,
along with the corresponding idempotent elements.

\subparagraph*{Max monoid}
    The max monoid $\maxSemigroup{n}$ is the set
    $\{1,2,\ldots,n\}$, equipped with the max operation.
    In this monoid, every element $i$ is idempotent
    since $\max(i,i) = i$.
\subparagraph*{Transformation monoid}
    The (full) transformation monoid $\transformation{n}$
    is the set of all (partial) functions from a set of $n$ elements into itself,
    equipped with the composition.
    See~\cite{ganyushkin2008classical}
    for a detailed definition
    of $\transformation{n}$ and its properties.
    Transformation monoids contain a wide range of idempotent elements.
    For instance, the identity function,
    mapping each element
    to itself, or the constant function $f_i$,
    mapping all elements to one fixed element $i$,
    are idempotent.
    In general,
    a function $f$ is idempotent if and only if each element $i$
    of its range satisfies $f(i)=i$.
    Transformations are commonly used to express
    transition monoids of deterministic finite state automata,
    as in this setting
    each input letter acts as a function over the set of states.
\subparagraph*{Relation monoid}
    For non-deterministic automata,
    transition monoids are more complex:
    functions fail to model
    the behaviour of the input letters
    since a single state can transition
    towards several distinct states.
    We use the (full) relation monoid $\relation{n}$ of
    all $n \times n$ Boolean matrices
    (matrices with values in $\{0,1\}$),
    equipped with the usual matrix composition
    (considering that $1+1 = 1$).
    There are plenty of idempotent matrices,
    for instance every diagonal matrix,
    or the full upper triangular matrix.
    Idempotent Boolean matrices are characterised in~\cite{Mukherjea1980},
    they correspond to specific orders
    over the subsets of $\{1,2,\ldots,n\}$.

\subsection{Ramsey function}
Given a finite monoid $\calM$,
the \emph{Ramsey function} $\truevalR{\calM}$ associated to $\calM$
maps each $\vari \in \mathbb{N}$
to the minimal integer
$\truevalRam{\calM}{\vari}$
such that every sequence of elements of $\calM$
of length $\truevalRam{\calM}{\vari}$
contains $\vari$ non-empty consecutive factors
that all correspond to the same idempotent element of $\calM$.

\subparagraph{Related work}
There are several known methods to approximate 
the Ramsey function $\truevalR{\calM}$
of a monoid $\calM$.
Ramsey's Theorem
and Simon's Factorisation Forest Theorem
are commonly used,
however, as stated before,
these approaches are too general
to obtain a precise bound.
The value of $\truevalRam{\calM}{\vari}$
is studied in~\cite{hall1996idempotents}
in the particular case $\vari=1$.
The authors prove that for a monoid $\calM$
that contains $N$ non-idempotent elements,
$\truevalRam{\calM}{1} \leq 2^N-1$.
No general related lower bound is proved,
but they show that for every $N \in \mathbb{N}$,
there exists a monoid $\calM_N$
with $N$ non-idempotent elements
that actually reaches the upper bound:
$\truevalRam{\calM_N}{1} = 2^N-1$.

\subparagraph{Our contributions}
We prove new bounds for
$\truevalR{\calM}$
by following a different approach:
instead of focusing on the non-idempotent elements of $\calM$,
we study its idempotent elements,
and the way in which they interact.
In Section \ref{sec:ramDec},
we start by considering two specific cases
where the exact value of the Ramsey function is easily obtained.
First, for a group $\calG$,
the Ramsey function is polynomial with respect to the size of $\calG$:
$\truevalRam{\calG}{\vari} = \vari|\calG|$.
Second,
we call \emph{max monoid} $\maxSemigroup{n}$
the set $\{1,2,\ldots,n\}$ equipped with the max operation,
and we show that here
the Ramsey function is exponential with respect to the size of $\maxSemigroup{n}$:
$\truevalRam{\maxSemigroup{n}}{\vari} = \vari^{n}$.
The later result implies that
$\vari^{n}$ is a lower bound for the Ramsey
function of every monoid $\calM$
that has $H_n$
as a submonoid.
Motivated by this observation,
we show how to get a related upper bound:
We define the \emph{regular $\calD$-length} $\reglength{\calM}$
of $\calM$ as the size of the largest max monoid $\maxSemigroup{\reglength{\calM}}$
embedded in $\calM$, and prove the following result.
\begin{restatable}{theorem}{ThmValRam}
\label{theorem:general_valram}
Every monoid $\calM$
of regular $\calD$-length $L$ satisfies
$
\vari^{L} \leq
\truevalRam{\calM}{\vari} \leq
(\vari|\calM|^{4})^{L}
$.
\end{restatable}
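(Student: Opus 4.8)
The lower bound is a direct transfer from the max monoid. By definition of the regular $\calD$-length, $\calM$ contains a submonoid $\calN$ isomorphic to $\maxSemigroup{L}$, and the already established identity $\truevalRam{\maxSemigroup{L}}{\vari}=\vari^{L}$ supplies a word $w\in\maxSemigroup{L}^{*}$ of length $\vari^{L}-1$ that contains no $\vari$ consecutive factors equal to one another. Carry $w$ through the isomorphism to a word $w'\in\calN^{*}\subseteq\calM^{*}$. Since $\calN$ is closed under products, every factor of $w'$ lies in $\calN$; since the isomorphism preserves products and sends each (necessarily idempotent) element of $\maxSemigroup{L}$ to an idempotent of $\calM$, every factor of $w'$ is idempotent in $\calM$, and two factors of $w'$ coincide exactly when the corresponding factors of $w$ do. Hence $w'$ also avoids $\vari$ consecutive equal idempotent factors, so $\truevalRam{\calM}{\vari}>\vari^{L}-1$, that is $\truevalRam{\calM}{\vari}\ge\vari^{L}$.

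For the upper bound I would argue by induction on $L=\reglength{\calM}$, with Green's relations as the main tool. The driving observation is that, reading $u=a_{1}\cdots a_{m}$, the prefix products $a_{1}\cdots a_{t}$ only descend in the $\calJ$-preorder, and that a regular $\calD$-class carries a group structure on each of its $\calH$-classes. The base case $L=1$ is exactly the class of groups: a finite monoid whose only idempotent is $1_{\calM}$ is a group, so the established identity $\truevalRam{\calG}{\vari}=\vari|\calG|$ gives $\truevalRam{\calM}{\vari}=\vari|\calM|\le\vari|\calM|^{4}$.

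For the inductive step, take $u$ of length $(\vari|\calM|^{4})^{L}=(\vari|\calM|^{4})\cdot(\vari|\calM|^{4})^{L-1}$. I may assume the prefix products eventually drop strictly below the group of units, since otherwise the group-of-units bound already yields $\vari$ consecutive idempotent factors within length $\vari|\calM|$. Below the units, I would use pigeonhole to isolate an infix of length $(\vari|\calM|^{4})^{L-1}$ along which all prefix products remain in a single regular $\calD$-class $D$; the factor $\vari|\calM|^{4}$ spent here accounts for the at most $|\calM|$ $\calJ$-classes traversed, the $\calR$- and $\calL$-bookkeeping inside $D$, and the group Ramsey cost incurred within $D$. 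Fixing an idempotent $e\in D$, the Schützenberger/Green correspondence lets me recode the letters of this infix as a word over the local monoid $e\calM e$ so that $\vari$ consecutive idempotent factors in $e\calM e$ pull back to $\vari$ consecutive idempotent factors of $u$. Because $e\ne 1_{\calM}$, adjoining $1_{\calM}$ on top of any max submonoid $F$ of $e\calM e$ yields a strictly larger max submonoid $F\cup\{1_{\calM}\}$ of $\calM$ (every $f\in F$ satisfies $f\le e<1_{\calM}$, so the natural order stays a chain and the max multiplication is preserved); hence $\reglength{e\calM e}\le L-1$, and the induction hypothesis applied on the infix completes the step.

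The main obstacle is the recoding carried out in the inductive step: realising the stabilised infix as a genuine word over $e\calM e$ for which ``$\vari$ consecutive equal idempotent factors'' transfers faithfully in both directions. Concretely, a block of the original infix computes a product in $\calM$ rather than in $e\calM e$, so one must use the $\calH$-class group of $D$ to convert block products into elements of $e\calM e$ while preserving both idempotency and the equality of consecutive factors; getting this correspondence exactly right, and simultaneously bookkeeping the $\calR$- and $\calL$-coordinates so that the per-level cost stays within $\vari|\calM|^{4}$, is where the real work lies. The remaining ingredients — the monotonicity of prefix products in the $\calJ$-preorder and the inequality $\reglength{e\calM e}\le L-1$ — are comparatively routine.
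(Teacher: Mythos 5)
Your lower bound is correct and coincides with the paper's argument: push the witness $u_L$ for $\maxSemigroup{L}$ through the monomorphism $\phi:\maxSemigroup{L}\rightarrow\calM$ guaranteed by the definition of the regular $\calD$-length, and observe that injectivity and the idempotency of every element of the image make Ramsey $\vari$-decompositions transfer faithfully.

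The upper bound, however, has a genuine gap, and you have in fact located it yourself: the whole inductive step hinges on recoding the stabilised infix as a word over the local monoid $e\calM e$ so that ``$\vari$ consecutive factors reducing to the same idempotent'' transfers in both directions, and you do not carry this out. A block of the infix computes a product in $\calM$, not in $e\calM e$; replacing each block product $b$ by, say, $e\cdot b\cdot e$ breaks the correspondence between factors of the recoded word and factors of $u$, so a Ramsey $\vari$-decomposition found downstairs does not pull back to one of $u$. A second unjustified step precedes this one: pigeonholing the prefix products into a single $\calD$-class gives no reason for that class to be \emph{regular} --- the prefix products only descend in the $\calJ$-preorder, and the class in which they linger need not contain an idempotent --- yet without regularity there is no $e$ to recurse on. Finally, the assertion that the $\calJ$-class traversal, the $\calR$/$\calL$ bookkeeping and the group-Ramsey cost all fit into a single per-level factor $\vari|\calM|^{4}$ is stated without any computation. (Your base case is fine once you add the one-line argument, via idempotent powers, that a finite monoid whose unique idempotent is $1_{\calM}$ is a group.)

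For contrast, the paper's proof of the upper bound avoids Green's relations entirely. It runs at most $L$ rounds of a direct algorithm: each round applies twice a pigeonhole subroutine that extracts, from a word of length $\vari'|\calM|^{2}$, a decomposition $xy_1\cdots y_{\vari'}z$ whose prefix $x$ and suffix $z$ absorb every $y_i$. This yields an idempotent $e_{j+1}$ absorbing every factor of the next-level word; either all $\vari$ pieces of the current level reduce to $e_{j+1}$, giving the Ramsey $\vari$-decomposition, or one piece does not and becomes the next-level word. If $L$ rounds complete without success, the collected idempotents $e_1,\dots,e_{L+1}$ form a monomorphic copy of $\maxSemigroup{L+1}$ in $\calM$, contradicting $\reglength{\calM}=L$. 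The per-round cost $|\calM|^{2}\cdot|\calM|^{2}\cdot\vari=\vari|\calM|^{4}$ is exactly where the base of the exponent comes from. Pursuing your Green's-relations route would amount to reproving a quantitative factorisation-forest-type statement, which is precisely the heavier machinery the paper is designed to bypass.
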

\noindent
Stated differently: every word $u \in \calM^*$
of length $(\vari|\calM|^{4})^{L}$
contains $\vari$ consecutive non-empty factors
corresponding to the same idempotent element of $\calM$,
and, conversely, there exists a word $u_\calM \in \calM^*$
of length $\vari^{L}-1$
that does not contain $\vari$ consecutive non-empty factors
corresponding to the same idempotent element.
Note that while the gap between the lower and upper bound is still wide,
this shows that the degree of
the Ramsey function $\truevalR{\calM}$
is determined by
the regular $\calD$-length of $\calM$.

\subsection{Regular $\calD$-length}\label{sec:regLength}
Theorem \ref{theorem:general_valram} states that
the degree of the Ramsey function of a monoid $\calM$
is determined by the regular $\calD$-length of $\calM$,
which is
the size of the largest max monoid
embedded in $\calM$.
We now show that for transformation monoids and
relation monoids,
the regular $\calD$-length is exponentially shorter than
the size.
Let us begin by mentioning an equivalent definition
of the regular $\calD$-length
in terms of Green's relations.
While this alternative definition is not used
in the proofs presented in this paper,
it allows us to immediately obtain the regular $\calD$-length
of monoids whose Green's relations are known.

\subparagraph{Alternative definition}
The regular $\calD$-length of a monoid $\calM$
is the size of
its largest chain of regular $\calD$-classes.
A $\calD$-class of $\calM$
is an equivalence class of the preorder
$\leq_{\calD}$ defined by 
$m \leq_{\calD} m'$ if $m = s \cdot m' \cdot t$ for some $s,t \in \calM$,
and it is called regular
if it contains at least one idempotent element
(see~\cite{pin2010mathematical} for more details).
The equivalence between both definitions is proved in Appendix \ref{app:Green}.

\subparagraph{Computing the regular $\calD$-length}
The following table compares the size and the regular $\calD$-length
of the monoids mentioned earlier.
The entries corresponding to the sizes
are considered to be general knowledge.
We detail below the row listing the regular $\calD$-lengths.

\begin{center}
{
\renewcommand{\arraystretch}{1.5}

\begin{tabular}
{|l||c|c|c|c|c}
\hline
Monoid&
\makebox[1.5cm]{$G$}&
\makebox[1.5cm]{$\maxSemigroup{n}$}&
\makebox[1.5cm]{$\transformation{n}$}&
\makebox[1.5cm]{$\relation{n}$}
\\\hline\hline
Size
&$|G|$&$n$&$(n+1)^n$
&$2^{(n^2)}$\\\hline
Regular $\calD$-length  
&$1$&$n$&$n+1$
&$\frac{n^2+n+2}{2}$\\\hline
\end{tabular}
}
\end{center}

First, every group $\calG$ contains a single idempotent element
(the neutral element), hence its regular $\calD$-length is $1$.
Then, using the definition of the regular $\calD$-length
in terms of embedded max monoid,
we immediately obtain that
$\reglength{\maxSemigroup{n}}$ is equal to $n$.
We get the next entry
using the definition of the regular $\calD$-length
in terms of chain of $\calD$-classes:
The transformation monoid $\transformation{n}$
is composed of a single chain of $n+1$ $\calD$-classes
that are all regular~\cite{ganyushkin2008classical},
hence its regular $\calD$-length is $n+1$.

Finally, for the relation monoid $\relation{n}$,
the situation is not as clear:
the $\calD$-classes
do not form a single chain,
and some of them
are not regular.
Determining the exact size of the largest chain of $\calD$-classes
(note the absence of ``regular'')
is still an open question,
yet it is known to grow
exponentially with respect to $n$:
a chain of $\calD$-classes
whose size is the Fibonacci number $F_{n+3}-1$
is constructed in \cite{breen1991maximal},
and, conversely, the upper bound
$2^{n-1}+n-1$ is proved
in~\cite{konieczny1992cardinalities}
(and slightly improved
in~\cite{li1995konieczny,
zhang1999cardinalities,
hong2000distribution}).
Our second main result is that,
as long as we only consider chains of \emph{regular} $\calD$-classes,
we can obtain the precise value of the maximal length,
and, somewhat surprisingly,
it is only quadratic in $n$:
\begin{restatable}{theorem}{JLength}
\label{thm:JLength}
The regular $\calD$-length
of the monoid of $n \times n$ Boolean matrices is
$\frac{n^2+n+2}{2}$.
\end{restatable}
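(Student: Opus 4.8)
The plan is to use the alternative characterisation, proved in Appendix \ref{app:Green}, that the regular $\calD$-length of $\relation{n}$ equals the length of the longest chain of regular $\calD$-classes, and to recast this as a question about distributive lattices. First I would recall the description of Green's relations on Boolean matrices through row spaces: for a Boolean matrix $A$ the set of all Boolean combinations of its rows forms a lattice, two matrices lie in the same $\calD$-class exactly when these row-space lattices are isomorphic, and $A \leq_{\calD} B$ implies that the row space of $A$ is a lattice quotient of that of $B$. An idempotent is a preorder, and the row space of a preorder is the lattice of up-sets of its quotient poset $P$; this is a distributive lattice, every finite distributive lattice arises this way, and the preorder fits in $\relation{n}$ as soon as $|P| \leq n$ (distribute the $n$ points among $|P|$ nonempty blocks). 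Hence the regular $\calD$-classes correspond to finite distributive lattices with at most $n$ join-irreducibles, and along any descending $\calD$-chain the number $j \leq n$ of join-irreducibles is non-increasing while the cardinality $s$ of the lattice is strictly decreasing between distinct classes. The theorem thus reduces to computing the height of the $\calD$-order on this family of lattices.

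For the lower bound I would exhibit an explicit copy of the max monoid $\maxSemigroup{(n^2+n+2)/2}$, i.e. a chain of idempotents, organised by the number $j$ of join-irreducibles. The idea is to pass through the levels $j = 0, 1, \ldots, n$, inserting at level $j$ exactly $j$ successive idempotents that realise a maximal $\calD$-chain of $j$-element posets connecting the top class of level $j-1$ to the $j$-element antichain. This yields a chain of length $1 + \sum_{j=1}^{n} j = \frac{n^2+n+2}{2}$, and I would verify that each consecutive pair differs by a single lattice-quotient step (hence is $\calD$-comparable), that every preorder involved uses at most $n$ blocks, and that the products behave as the $\max$ operation.

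For the upper bound I would show that no chain of regular $\calD$-classes can exceed this length. The two monovariants $s$ and $j$ control a chain, but handling them is delicate: the bound coming from $s$ alone is far too weak, since it only forbids length above $2^n$, and the bound coming from $j$ alone ignores the within-level chains. The essential point is that a single chain cannot realise all lattice cardinalities that are individually achievable; already for $n = 4$ there are twelve achievable cardinalities, yet the longest chain has length eleven. I would therefore analyse the covering relations of the $\calD$-order directly — equivalently, determine when one distributive lattice is a maximal proper quotient of another — and prove that, balancing the growth of $j$ against the increments of $s$, the number of classes attainable along one chain is exactly $1 + \sum_{j=1}^{n} j$.

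The hard part will be this upper bound, precisely because no additive rank function is simultaneously strictly $\calD$-monotone and bounded by $\binom{n+1}{2}$: the natural candidate $\min\!\bigl(s-1,\ \binom{j+1}{2}\bigr)$ has the correct maximum $\binom{n+1}{2}$ but fails strict monotonicity among high-cardinality lattices sharing the same $j$ (for instance the size-$12$ and size-$16$ classes at $j=4$, which are $\calD$-comparable yet both saturate the bound). Consequently the trade-off between enlarging the lattice and adding a join-irreducible must be controlled by a global counting argument over the whole chain rather than by a local potential, and making that counting tight enough to force the exact value $\frac{n^2+n+2}{2}$ is where the quantitative heart of the proof lies.
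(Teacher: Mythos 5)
Your lower-bound sketch is plausible and, modulo details, parallels the paper's explicit construction (the paper walks from the identity matrix to the full upper triangular matrix by adding the $\binom{n}{2}$ above-diagonal $1$'s one at a time, then to the null matrix by erasing the $n$ rows one at a time, and verifies directly that the resulting $\frac{n^2+n+2}{2}$ matrices form a copy of $\maxSemigroup{(n^2+n+2)/2}$). But your upper bound has a genuine gap, and you say so yourself: you observe that neither the lattice cardinality $s$ nor the number of join-irreducibles $j$ yields a potential that is both strictly monotone along a chain and bounded by $\binom{n+1}{2}$, and you defer the resolution to an unspecified ``global counting argument over the whole chain.'' That unspecified argument \emph{is} the theorem; without it you have proved nothing beyond the trivial bounds you dismiss. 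Moreover, your premise that no suitable local potential exists is an artefact of your chosen parameters: the paper's entire upper-bound proof consists of exhibiting exactly such a potential, namely the sum of the number of \emph{positive sets} of an idempotent $A$ (maximal all-ones principal blocks, at most $n$ of them, pairwise disjoint) and the number of its \emph{free pairs} (pairs incomparable under the induced relation $\refrel_A$, at most $\binom{n}{2}$ of them). Lemmas~\ref{lemma:clique_containment} and~\ref{lemma:incomparable_containment} show both quantities are non-increasing from $A_i$ to $A_{i+1}$ along the chain, and Lemma~\ref{prop::upper_bound} shows that if neither decreases then $A_i = A_{i+1}$; the bound $n + \binom{n}{2} + 1 = \frac{n^2+n+2}{2}$ is then immediate. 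So the quantitative heart you correctly locate is not reachable by refining your $(s,j)$ bookkeeping; it requires a different pair of invariants.

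A secondary inaccuracy: your assertion that ``an idempotent is a preorder'' is false as stated, since idempotents of $\relation{n}$ need not be reflexive (the null matrix, or the images $\phi(g(n)+1+m)$ for $m \geq 1$ in the paper's lower-bound construction, are idempotent with zero diagonal entries). The row-space/distributive-lattice description of regular $\calD$-classes can be salvaged, but the dictionary between idempotents and posets of blocks must be set up for partial, non-reflexive structures, which affects both your count of achievable classes per level $j$ and your claim that a preorder on $|P|$ blocks embeds whenever $|P| \leq n$ (you must also leave room for the ``dead'' indices with empty rows or columns).
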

\noindent
Therefore, the regular $\calD$-length of
a transformation monoid
is exponentially
smaller than its size,
and the regular $\calD$-length of
a relation monoid
is even exponentially smaller than
its largest chain of $\calD$-classes.
For such kind of monoids,
Theorem \ref{theorem:general_valram}
performs considerably better than 
previously known methods
to find idempotent factors.
For instance, it was used in~\cite{MuschollP19} to close
the complexity gap left in~\cite{BaschenisGMP18}
for the problem of
deciding whether the function defined
by a given two-way word transducer
is definable by a one-way transducer.

\section{Definitions and notations}

We define in this section the
notions that are used throughout the
paper.
We denote by $\mathbb{N}$ the set
$\{0,1,2,\ldots\}$, 
and for all $i \leq j \in \mathbb{N}$
we denote by $[i,j]$ the interval
$\{i,i+1,\ldots,j\}$.

\subparagraph{Monoids}
A (finite) \emph{semigroup} $(\calS,\cdot)$
is a finite set $\calS$
equipped with a binary operation
$\cdot : \calS \times \calS \rightarrow \calS$
that is \emph{associative}:
$
(s_1 \cdot s_2) \cdot s_3 =
s_1 \cdot (s_2 \cdot s_3)
\textup{ for every }s_1,s_2,s_3 \in \calS. 
$
A \emph{monoid}
is a semigroup $(\calM,\cdot)$
that contains a \emph{neutral element} $1_{\calM}$:
$
m \cdot 1_\calM = m = 1_\calM \cdot m
\textup{ for all } m \in \calM.
$
A \emph{group}
is a monoid $(\calG,\cdot)$
in which every element $g \in \calG$
has an \emph{inverse element} $g^{-1} \in \calG$:
$
g \cdot g^{-1} = 1_{\calG} = g^{-1} \cdot g.
$
We always denote the semigroup operation
with the symbol $\cdot$.
As a consequence, we
identify a semigroup $(\calS,\cdot)$
with its set of elements $\calS$.

An element $e$ of a semigroup $\calS$
is called \emph{idempotent} if it satisfies
$e \cdot e = e$.
Note that whereas a finite semigroup
does not necessarily contain a neutral element,
it always contains at least one idempotent element:
iterating any element $s \in \calS$
eventually yields an idempotent element,
called the \emph{idempotent power}
of $s$,
and denoted $\idempower{s} \in \calS$.

A \emph{homomorphism} 
between two monoids $\calM$ and $\calM'$
is a function $\phi: \calM \rightarrow \calM'$
preserving the monoid structure:
$\phi(m_1 \cdot m_2) = \phi(m_1) \cdot \phi(m_2)$
for all $m_1,m_2 \in \calM$
and $\phi(1_{\calM}) = \phi(1_{\calM'})$.
A \emph{monomorphism} is an injective homomorphism,
an \emph{isomorphism} is a bijective homomorphism.

\subparagraph{Ramsey decomposition}
Let $\calM$ be a monoid.
A \emph{word} over $\calM$
is a finite sequence
$u = m_1 m_2 \ldots m_n \in \calM^*$
of elements of $\calM$.
The \emph{length} of $u$ is its number
of symbols $|u|=n \in \mathbb{N}$.
We enumerate the positions between the letters of $u$
starting from $0$ before the first letter,
until $|u|$ after the last letter.
A \emph{factor} of $u$ is a subsequence of $u$
composed of the letters between two such positions $i$ and $j$:
$u[i,j] = m_{i+1} m_{i+2} \ldots m_j \in \calM^*$
for some $0 \leq i \leq j \leq |u|$
(where $u[i,j] = \epsilon$ if $i = j$).
We denote by $\product{u}$ the element
$1_{\calM} \cdot m_1 \cdot m_2 \cdot \ldots \cdot m_n \in \calM$,
and we say that $u$ \emph{reduces} to $\product{u}$.
For every integer $\vari \in \mathbb{N}$,
a \emph{$\vari$-decomposition} of
$u$ is a decomposition of $u$ in
$\vari+2$ factors such that the $\vari$ middle 
ones are non-empty:
\[
u = x y_1 y_2 \ldots y_{\vari} z,
\textup{ where $x,z \in \calM^*$,
and $y_i \in \calM^+$
for every $1 \leq i \leq \vari$.}
\]
A $\vari$-decomposition is called
\emph{Ramsey}
if all the middle factors
$y_1,y_2, \ldots, y_{\vari}$
reduce to the same idempotent element
$e \in \calM$.
For instance,
a word has a Ramsey $1$-decomposition
if and only if it contains a factor that reduces
to an idempotent element.
The
\emph{Ramsey function}
$\truevalR{\calM} : \mathbb{N} \rightarrow \mathbb{N}$
associated to $\calM$
is the function mapping each
$\vari \in \mathbb{N}$
to the minimal
$\truevalRam{\calM}{\vari} \in \mathbb{N}$
such that every word $u \in \calM^*$
of length $\truevalRam{\calM}{\vari}$
has a Ramsey $\vari$-decomposition.

\section{Ramsey decompositions}\label{sec:ramDec}

In this section, we bound the Ramsey function $\truevalR{\calM}$
associated to a monoid $\calM$.
As a first step
we consider two basic cases
for which the exact value of the Ramsey function
is obtained:
in Subsection \ref{subsec::group_val}
we show that every group $\calG$ satisfies
$\truevalRam{\calG}{\vari} = \vari|\calG|$,
and in Subsection \ref{subsec::max_val}
we show that every
max monoid $\maxSemigroup{n}$
(obtained by equipping the first $n$ positive integers
with the $\max$ operation)
satisfies
$\truevalRam{\maxSemigroup{n}}{\vari} = \vari^n$.
Finally, in Subsection \ref{subsec::gen_val},
we prove bounds in the general case
by studying the submonoids
of $\calM$ isomorphic to a max monoid.

\subsection{Group: prefix sequence algorithm}\label{subsec::group_val}
We show that in a group,
the Ramsey function is polynomial with respect to
the size.
\begin{proposition}\label{proposition:group_valram}
For every group $\calG$,
$
\truevalRam{\calG}{\vari} =
\vari|\calG|
$
for all $\vari \in \mathbb{N}$.
\end{proposition}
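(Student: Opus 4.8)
The plan is to exploit the fact that a group contains a unique idempotent, namely its neutral element $1_\calG$: if $e \cdot e = e$, then multiplying by $e^{-1}$ yields $e = 1_\calG$. Consequently, a Ramsey $\vari$-decomposition of a word $u \in \calG^*$ is precisely a decomposition $u = x y_1 \cdots y_\vari z$ whose non-empty middle factors each reduce to $1_\calG$.

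The central tool I would introduce is the \emph{prefix sequence} of $u = g_1 \cdots g_n$: the sequence $p_0, p_1, \ldots, p_n \in \calG$ defined by $p_0 = 1_\calG$ and $p_i = \product{g_1 \cdots g_i}$. Since $\calG$ is a group, the factor $u[i,j]$ reduces to $p_i^{-1} \cdot p_j$, and therefore reduces to $1_\calG$ exactly when $p_i = p_j$. Chaining this observation, I would establish the key equivalence: $u$ admits a Ramsey $\vari$-decomposition if and only if some element of $\calG$ occurs at least $\vari + 1$ times among $p_0, \ldots, p_n$. Indeed, $\vari+1$ positions $a_0 < \cdots < a_\vari$ carrying a common prefix value delimit exactly $\vari$ non-empty factors all reducing to $1_\calG$, and conversely any such decomposition produces $\vari+1$ equal prefix values.

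From this equivalence both bounds follow by counting on a sequence of $|u|+1$ values drawn from a set of size $|\calG|$. For the upper bound, if $|u| = \vari|\calG|$ then its prefix sequence has $\vari|\calG| + 1$ terms, so by the pigeonhole principle some element of $\calG$ occurs at least $\vari + 1$ times, forcing a Ramsey $\vari$-decomposition; hence $\truevalRam{\calG}{\vari} \leq \vari|\calG|$. For the lower bound, I would realise a witness of length $\vari|\calG| - 1$: any sequence $p_0, \ldots, p_{\vari|\calG|-1}$ starting at $p_0 = 1_\calG$ in which every element of $\calG$ occurs exactly $\vari$ times is the prefix sequence of a unique word $u_\calG$, recovered through $g_i = p_{i-1}^{-1} \cdot p_i$; since no value repeats more than $\vari$ times, $u_\calG$ has no Ramsey $\vari$-decomposition, giving $\truevalRam{\calG}{\vari} \geq \vari|\calG|$.

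The routine edge case $\vari = 0$, where $\truevalRam{\calG}{0} = 0$, is immediate. I expect the main subtlety to lie not in the counting but in setting up the prefix-sequence correspondence cleanly — in particular, verifying that the map sending a word over $\calG$ to its prefix sequence is a bijection onto the sequences beginning with $1_\calG$, so that the lower-bound witness can be specified purely in terms of prefix values and then translated back into an actual word.
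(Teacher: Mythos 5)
Your proposal is correct and follows essentially the same route as the paper: the same prefix-sequence correspondence (the paper's Lemma~\ref{lemma::pref}), the same pigeonhole argument for the upper bound, and the same witness built by prescribing a prefix sequence in which every group element occurs exactly $\vari$ times and recovering the letters via $g_i = p_{i-1}^{-1}\cdot p_i$. The only difference is presentational: you make explicit the fact that the neutral element is the unique idempotent of a group, which the paper uses implicitly.
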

We fix for this
subsection a group $\calG$ and $\vari \in \mathbb{N}$.
We begin by proving an auxiliary lemma,
which we then apply to prove
matching bounds
for $\truevalRam{\calG}{\vari}$:
First, we define an algorithm
that extracts a Ramsey $\vari$-decomposition
out of every word of length $\vari|\calG|$.
Then, we present the construction of a witness
$u_{\calG} \in \calG^*$ of length $\vari|\calG|-1$
that has no Ramsey $\vari$-decompositions.

\subparagraph{Key lemma}
In a group, the presence of inverse elements
allows us to establish a correspondence between
the factors of a word $u \in \calG^*$
that reduce to the neutral element,
and the pairs of prefixes of $u$
that both reduce to the same element.
\begin{lemma}\label{lemma::pref}
Two prefixes $u[0,i]$ and $u[0,j]$ of a word
$u \in \calG^*$
reduce to the same element
if and only if $u[i,j]$ reduces to the
neutral element of $\calG$.
\end{lemma}

\begin{proof}
Let $u \in \calG^*$ be a word.
The statement is a direct consequence of the fact that
for every $0 \leq i \leq j \leq |u|$,
$\product{u[0,i]} \cdot \product{u[i,j]} =
\product{u[0,j]}$:
If $\product{u[0,i]} = \product{u[0,j]}$,
then
\[
\product{u[i,j]} = 
\product{u[0,i]}^{-1} \cdot \product{u[0,j]} =
\product{u[0,i]}^{-1} \cdot \product{u[0,i]}
= \neutral{\calG}.
\]
Conversely, if $\product{u[i,j]} = \neutral{\calG}$,
then
\[
\product{u[0,i]} =
\product{u[0,i]} \cdot \neutral{\calG} =
\product{u[0,i]} \cdot \product{u[i,j]} =
\product{u[0,j]}.
\qedhere
\]
\end{proof}

\subparagraph{Algorithm}
We define an algorithm
constructing Ramsey $\vari$-decompositions.
\begin{enumerate}[]
    {\setlength\itemindent{-15pt}
    \item[]\textsc{Alg}$_1$: Start with
    $u \in \calG^*$
    of length $\vari|\calG|$;}
    \begin{enumerate}[nolistsep]
    {\setlength\itemindent{-10pt}
    \item 
    Compute the $\vari|\calG|+1$ prefixes
    $\product{u[0,0]}$, $\product{u[0,1]}$, 
    \ldots, $\product{u[0,|u|]}$
    of $u$;
    }
    {\setlength\itemindent{-10pt}
    \item
    Find $\vari+1$ indices
    $i_0$, $i_1$, \ldots, $i_{\vari}$
    such that all the $\product{u[0,i_j]}$ are equal;
    }
    {\setlength\itemindent{-10pt}
    \item
    Return the Ramsey $\vari$-decomposition
    $u = u[i_0,i_1]u[i_1,i_2] \ldots
    u[i_{\vari-1},i_{\vari}]$.
    }
    \end{enumerate}
\end{enumerate}
Since Lemma \ref{lemma::pref}
ensures that every pair of elements
$i_{j}$, $i_{j+1}$ identified
at step $2$ satisfies
$\product{u[i_j,i_{j+1}]} = \neutral{\calG}$,
we are guaranteed that the returned
$\vari$-decomposition is Ramsey.

\subparagraph{Witness}
We build a word $u_{\calG} \in \calG^*$
of length $\vari|\calG|-1$
that has no Ramsey $\vari$-decompositions.
Let $v = a_1a_2 \ldots a_{\vari |\calG|} \in \calG^*$
be a word of length
$\vari |\calG|$,
starting with the letter $\neutral{\calG}$,
and containing exactly $\vari$ times
each element of $\calG$.
For instance, given an enumeration
$g_1,g_2, \ldots, g_{|\calG|}$
of the elements of $\calG$
starting with $g_1 = \neutral{\calG}$,
we can simply pick
$
v=g_1^{\vari}g_2^{\vari} \ldots
g_{|\calG|}^{\vari}.
$
Now let
$u_{\calG} = b_1b_2 \ldots b_{\vari |\calG|-1}$
be the word
whose sequence of reduced prefixes is $v$:
for every $1 \leq i \leq \vari |\calG|-1$,
the letter $b_i$ is equal to $a_{i}^{-1} \cdot a_{i+1}$.
Then for every $\vari$-decomposition of $u_{\calG}$,
at least one of the factors do not reduce
to the neutral element of $\calG$,
since otherwise
Lemma \ref{lemma::pref}
would imply the existence of $\vari + 1$
identical letters in $v$,
which is not possible by construction.
As a consequence,
$u_{\calG}$ has no Ramsey $\vari$-decompositions.

\subsection{Max monoid: divide and conquer algorithm}\label{subsec::max_val}

Given an integer $n \in \mathbb{N}$,
the \emph{max monoid}, denoted $\maxSemigroup{n}$,
is the monoid over the set $\{1,2,\ldots,n\}$
with the associative operation
$i \cdot j = \max(i,j)$.
Whereas in a group
only the neutral element is idempotent,
each element $i$ of the max monoid
$\maxSemigroup{n}$
is idempotent since $\max(i,i) = i$.
As a result of this abundance of idempotent elements,
an exponential bound is required to ensure
the presence of consecutive factors
reducing to the same idempotent element.

\begin{proposition}\label{proposition:maxsemigroup_valram}
For every max monoid $\maxSemigroup{n}$,
$
\truevalRam{\maxSemigroup{n}}{\vari} =
\vari^{n}
$
for all $\vari \in \mathbb{N}$.
\end{proposition}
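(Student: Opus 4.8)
The plan is to establish the two inequalities $\truevalRam{\maxSemigroup{n}}{\vari} \le \vari^n$ and $\truevalRam{\maxSemigroup{n}}{\vari} \ge \vari^n$ separately, each by induction on $n$, exploiting the fact that in $\maxSemigroup{n}$ the reduction $\product{w}$ of a word $w$ is simply the largest letter occurring in $w$, and that every element is idempotent. The base case $n = 1$ is immediate: over $\maxSemigroup{1} = \{1\}$ a word admits a Ramsey $\vari$-decomposition if and only if it has length at least $\vari$ (it then splits into $\vari$ single-letter factors, all reducing to $1$), so $\truevalRam{\maxSemigroup{1}}{\vari} = \vari = \vari^1$.

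For the upper bound, given a word $u$ of length $\vari^n$, I would cut it into $\vari$ consecutive blocks $B_1, \ldots, B_\vari$ of length $\vari^{n-1}$ each, and let $m$ be the largest letter occurring in $u$. If $m$ occurs in every block, then each $B_i$ reduces to $m$, so $B_1 \cdots B_\vari$ is already a Ramsey $\vari$-decomposition. Otherwise some block $B_j$ contains no occurrence of $m$, hence is a word over $\{1, \ldots, m-1\} \subseteq \maxSemigroup{n-1}$ of length $\vari^{n-1} = \truevalRam{\maxSemigroup{n-1}}{\vari}$; by the induction hypothesis $B_j$ admits a Ramsey $\vari$-decomposition, whose middle factors are also consecutive non-empty factors of $u$ reducing to a common idempotent. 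Either way $u$ has a Ramsey $\vari$-decomposition, giving $\truevalRam{\maxSemigroup{n}}{\vari} \le \vari^n$.

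For the lower bound, I would construct recursively a witness $w_n \in \maxSemigroup{n}^*$ of length $\vari^n - 1$ with no Ramsey $\vari$-decomposition. Set $w_1 = 1^{\vari - 1}$, and define $w_n = w_{n-1}\, n\, w_{n-1}\, n \cdots n\, w_{n-1}$, that is, $\vari$ copies of $w_{n-1}$ separated by $\vari - 1$ occurrences of the top letter $n$; its length is $\vari(\vari^{n-1} - 1) + (\vari - 1) = \vari^n - 1$. Suppose toward a contradiction that $w_n$ has a Ramsey $\vari$-decomposition with middle factors $y_1, \ldots, y_\vari$, all reducing to the same idempotent $e$. If $e = n$, each $y_i$ must contain an occurrence of $n$; since the $\vari$ factors are pairwise disjoint and $n$ occurs only $\vari - 1$ times, this is impossible. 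If $e < n$, then no $y_i$ contains $n$, so the contiguous factor $y_1 \cdots y_\vari$ lies inside a single maximal $n$-free segment, that is, inside one copy of $w_{n-1}$; this exhibits a Ramsey $\vari$-decomposition of $w_{n-1}$, contradicting the induction hypothesis. Hence no such decomposition exists, which yields $\truevalRam{\maxSemigroup{n}}{\vari} \ge \vari^n$ and completes the proof. The step I expect to be the crux is precisely this construction: the recursive interleaving of copies of $w_{n-1}$ with the fresh maximal letter is what forces the length to the exact value $\vari^n - 1$, and the dichotomy on whether the common idempotent $e$ equals the top letter is what makes the argument close cleanly.
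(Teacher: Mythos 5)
Your proof is correct and takes essentially the same route as the paper: the upper bound is the same divide-and-conquer argument splitting $u$ into $\vari$ blocks of length $\vari^{n-1}$ and recursing on a block missing the maximal letter, and the lower bound uses the identical witness $w_n = (w_{n-1}\,n)^{\vari-1}w_{n-1}$ with the same dichotomy on whether the common idempotent equals the top letter $n$.
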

The proof is done in two steps:
we first define an algorithm
that extracts a Ramsey $\vari$-decomposition
out of every word of length $\vari^{n}$,
and then
we present the construction of a witness
$u_{n}$ of length $\vari^{n}-1$
that has no Ramsey $\vari$-decompositions.

\subparagraph{Algorithm}
We define an algorithm that extracts
a Ramsey $\vari$-decompositions out of each word
$u \in \maxSemigroup{n}^*$
of length $\vari^n$.
It is a basic divide and conquer algorithm:
we divide the initial word $u$
into $\vari$ equal parts.
If each of the $\vari$ parts reduces to $n$,
they form a Ramsey $\vari$-decomposition
since $n$ is an idempotent element.
Otherwise, one part does not contain the maximal element
$n \in \maxSemigroup{n}$, and we start over with it.
Formally,
\begin{enumerate}[]
    {\setlength\itemindent{-15pt}
    \item[]\textsc{Alg}$_2$: Start with
    $u \in \maxSemigroup{n}^*$
    of length $\vari^n$, initialize $j$ to $n$.
    While $j>0$, repeat the following:}
    \begin{enumerate}[nolistsep]
    {\setlength\itemindent{-10pt}
    \item
    Split $u$ into $\vari$ factors
    $u_1$, $u_2$, \ldots, $u_{\vari}$
    of length $\vari^{j-1}$;
    }
    {\setlength\itemindent{-10pt}
    \item\label{algo::max_stop}
    If every $u_i$ contains the letter $j$,
    return the Ramsey $\vari$-decomposition
    $u=u_1 u_2 \ldots u_\vari$;
    }
    {\setlength\itemindent{-10pt}
    \item
    If $u_i$ does not contain $j$ for some $1 \leq i \leq j$,
    decrement $j$ by $1$ and
    set $u \coloneqq u_i \in \maxSemigroup{j-1}^*$.
    }
    \end{enumerate}
\end{enumerate}
The algorithm is guaranteed
to eventually return
a Ramsey $\vari$-decomposition:
if the $n^{\textup{th}}$ cycle of the algorithm
is reached,
it starts with a word of length $k$ whose letters are in
the monoid $\maxSemigroup{1}$,
which only contains the letter $1$,
hence the algorithm will go to step b.

\subparagraph{Witness}
We construct an infinite sequence
of words
$u_1,u_2, \ldots \in \mathbb{N}^*$
such that for all $n \in \mathbb{N}$,
\begin{enumerate*}[(a)]
\item
$u_n \in \maxSemigroup{n}$ satisfies $|u_n| = \vari^n-1$ and
\item
$u_n$ has no Ramsey $\vari$-decompositions.
\end{enumerate*}
Let
\[
\begin{array}{llll}
u_1 & = & 1^{\vari-1} \in \maxSemigroup{1}^*,\\
u_{n} & = & (u_{n-1}n)^{\vari-1}u_{n-1}
\in \maxSemigroup{n}^*
& \textup{for every } n > 1.
\end{array}
\]
For every $n>1$, the word $u_n$ is defined as $\vari$
copies of $u_{n-1}$ separated by the letter $n$.
We prove by induction
that the two conditions are satisfied
by each word of the sequence.
The base case is immediate:
the word $u_1$ has length $\vari-1$,
and as a consequence has
no decomposition into 
$\vari$ nonempty factors.
Now suppose that $n>1$,
and that $u_{n-1}$
satisfies the two properties.
Then $u_n$ has the required length:
\[
|u_{n}| = (\vari-1)(|u_{n-1}| + 1) + |u_{n-1}|
=(\vari-1)\vari^{n-1} + \vari^{n-1}-1
= \vari^n -1.
\]
To conclude, we show that every 
$\vari$-decomposition
\begin{equation}\label{equation:decompG}
u_n = xy_1y_2 \ldots y_{\vari}z,
\textup{ with } y_i \in \maxSemigroup{n}^+
\textup{ for all } 1 \leq i \leq \vari
\end{equation}
is not Ramsey.
Let $y$ be the factor $y_1y_2 \ldots y_{\vari}$
of $u_n$, and consider the two following cases:
\begin{itemize}
\item
If $\product{y} \neq n$,
none of the $y_i$ contains the letter $n$,
hence $y$ is factor of one of the factors
$u_{n-1}$ of $u_{n}$.
Therefore, by the induction hypothesis,
Decomposition (\ref{equation:decompG}) is not Ramsey.
\item 
If $\product{y} = n$,
since $u_{n}$ contains only 
$\vari-1$ copies of the letter $n$,
one of the factors $y_i$ does not contain $n$
for $1 \leq i \leq \vari$.
Then $\product{y} \neq \product{y_i}$, hence Decomposition (\ref{equation:decompG}) is not Ramsey.
\end{itemize}

\begin{example}
Here are the first three words of the sequence
in the cases $\vari=2$ and $\vari=3$:
\[
\begin{array}{lllll}
\vari = 2: & u_1 = 1 & u_2 = 121 & u_3 = 1213121,\\
\vari = 3: & u_1 = 11 & u_2 = 11211211 & u_3 = 11211211311211211311211211.
\end{array}
\]
\end{example}

\subsection{General setting}\label{subsec::gen_val}
We saw in the previous subsection that
for the max monoid $\maxSemigroup{n}$,
words of length exponential
with respect to $n$ are required
to guarantee the presence of Ramsey decompositions
(Proposition \ref{proposition:maxsemigroup_valram}).
Note that the same lower bound applies to
every monoid $\calM$ that contains
a copy of $\maxSemigroup{n}$
as submonoid.
We now show that we can also obtain
an upper bound for $\truevalRam{\calM}{\vari}$
by studying the submonoids of $\calM$
isomorphic to a max monoid.
We formalise this idea through the notion of
regular $\calD$-length of a monoid.

\subparagraph{Regular $\calD$-length}
The \emph{regular $\calD$-length} of a monoid $\calM$,
denoted $\reglength{\calM}$,
is the size
of the largest max monoid embedded in $\calM$.
Formally, it is the largest $\ell \in \mathbb{N}$
such that there exists a monomorphism
(i.e. injective monoid homomorphism)
$\phi: \maxSemigroup{\ell} \rightarrow \calM$.
We now present the main theorem of this section,
which states that for every monoid $\calM$,
the degree of
$\truevalRam{\calM}{\vari}$
is determined by the regular $\calD$-length of $\calM$.

\ThmValRam*
Let us fix for the whole subsection a monoid $\calM$
of regular $\calD$-length $\reglength{\calM}$
and an integer $\vari \in \mathbb{N}$.
The lower bound is a corollary of
Proposition \ref{proposition:maxsemigroup_valram}:
the max monoid $\maxSemigroup{\reglength{\calM}}$
has a witness $u_{\reglength{\calM}}$ of length
$\vari^{\reglength{\calM}}-1$ that has no Ramsey
$\vari$-decompositions
(its construction is presented
in the previous subsection).
Then, by definition of the regular $\calD$-length,
there exists a monomorphism
$\phi: \maxSemigroup{\reglength{\calM}} \rightarrow \calM$,
and applying $\phi$ to $u_{\reglength{\calM}}$ letter by letter
yields a witness $u_{\reglength{\calM}}' \in \calM^*$
of length $\vari^{\reglength{\calM}}-1$ that has no Ramsey
$\vari$-decompositions.

The rest of the subsection is devoted to
the proof of the upper bound.
We begin by defining an auxiliary algorithm
that extracts from each long enough word
a decomposition where the prefix
and suffix absorb the middle factors.
Then, we define our main algorithm which,
on input $u \in \calM^*$
of length $(\vari|\calM|^{4})^{n}$ for some
$n \in \mathbb{N}$,
either returns a Ramsey
$\vari$-decomposition
of $u$,
or a copy of the max monoid $\maxSemigroup{n+1}$
embedded in $\calM$.
In particular, if $n$ is equal to the
regular $\calD$-length $\reglength{\calM}$ of $\calM$,
we are guaranteed to obtain a Ramsey $\vari$-decomposition.

\subparagraph{Auxiliary algorithm}
We define an algorithm which,
on input $u \in \calM^*$
of length $\vari |\calM|^2$,
returns a $\vari$-decomposition
\[
u = x y_1 y_2 \ldots y_{\vari} z,
\textup{ where $x,z \in \calM^*$,
and $y_i \in \calS^+$
for every $1 \leq i \leq \vari$}
\]
such that for every $1 \leq i \leq \vari$,
both $x$ and $z$ are able to absorb the factor
$y_i$:
$\product{xy_i} = \product{x}$ and 
$\product{y_iz} = \product{z}$.
This is done as follows:
since $u$ is a word of length $\vari|\calM|^2$,
it can be split into $\vari|\calM|^2 + 1$
distinct prefix-suffix pairs.
Then $\vari+1$ of these pairs
reduce to the same pair of elements of $\calM$,
which immediately yields the desired decomposition.
Formally,
\begin{enumerate}[]
    {\setlength\itemindent{-15pt}
    \item[]\textsc{Alg}$_3$: Start with
    $u \in \calM^*$
    of length $\vari|\calM|^{2}$;}
    \item
    \begin{enumerate}
        \item
        Compute the $\vari|\calM|^{2}+1$ prefixes
        $\product{u[0,0]}$, $\product{u[0,1]}$, 
        \ldots, $\product{u[0,|u|]} \in \calM$
        of $u$,
        \item
        Compute the $\vari|\calM|^{2}+1$ suffixes
        $\product{u[0,|u|]}$, $\product{u[1,|u|]}$,
        \ldots, $\product{u[|u|,|u|]} \in \calM$
        of $u$,
        \item
        Identify $\vari+1$ indices
        $s_0$, $s_1$, \ldots, $s_{\vari}$
        such that
        \begin{enumerate*}[(1)]
        \item
        all the $\product{u[0,s_i]}$ are equal, 
        \item
        all the $\product{u[s_i,|u|]}$ are equal;
        \end{enumerate*}
    \end{enumerate}
    \item
    Set $x = u[0,s_0]$,
    $z = u[s_\vari,|u|]$,
    and $y_i = u[s_{i-1},s_i]$
    for every $1 \leq i \leq \vari$;
    \item
    Return the $\vari$-decomposition
    $x y_1 y_2 \ldots y_{\vari} z$
    of $u$.
\end{enumerate}

\subparagraph{Main algorithm}
We define an algorithm
extracting Ramsey $\vari$-decompositions.
Over an input $u \in \calM^*$ of length $(\vari|\calM|^{4})^{n}$ for $n \in \mathbb{N}$,
the algorithm works by defining
gradually shorter words
$u_n,u_{n-1},\ldots \in \calM^*$,
where each $u_j$ has length $(\vari|\calM|^{4})^{j}$,
along with a sequence
of idempotent elements $e_{n+1},e_{n}, \ldots \in \calM$.
Starting with $u_n = u$,
we define $e_{n+1}$ as the idempotent power
of some well chosen factors of $u_n$.
We then consider $\vari$ consecutive factors of $u_n$.
If all of them reduce to $e_{n+1}$,
they form a Ramsey $\vari$-decomposition, and we are done.
Otherwise, we pick a factor $u_{n-1}$ that
does not reduce to $e_{n+1}$,
and we start over.
This continues until either a Ramsey $\vari$-decomposition
is found, or $n$ cycles are completed.
In the later case,
we show that the function
$\phi : \maxSemigroup{n+1}  \rightarrow \calM$
mapping $i$ to $e_i$ is a monomorphism.

\begin{itemize}[]
    {\setlength\itemindent{-15pt}
    \item[] \textsc{Alg}$_4$: Start with
    $u \in \calM^*$
    of length $(\vari|\calM|^{4})^{n}$.
    Initialize $u_n$ to $u$ and $j$ to $n$.}
    \item 
    While $j>0$, repeat the following:
    \begin{enumerate}[]
    \item\label{algo::gen_first}
    \begin{enumerate}
        \item
        Call \textsc{Alg}$_3$ to get an $m$-decomposition
        $u_j = x y_1 y_2 \ldots y_m z$,
        where $m = \vari^j|\calM|^{4j-2}$;
        \item
        Set
        $v \coloneqq \product{y_1}\product{y_2}
        \ldots \product{y_m} \in \calM^*$;
    \end{enumerate}
    \item\label{algo::gen_second}
    \begin{enumerate}
        \item
        Call \textsc{Alg}$_3$ to get an $m'$-decomposition
        $v = x' y_1' y_2'\ldots y_{m'}'z'$,
        where $m' = \vari^j|\calM|^{4j-4}$;
        \item\label{algo::gen_def_e}
        Set
        $w \coloneqq \product{y_1'}\product{y_2'}
        \ldots \product{y_{m'}'} \in \calM^*$,
        and set
        $e_{j+1} \coloneqq (\product{z'x'})^{\#}$;
    \end{enumerate}
    \item\label{algo::gen_third}
    \begin{enumerate}
        \item
        Split $w$ into $\vari$ factors
        $y_1''$, $y_2''$, \ldots, $y_{\vari}''$
        of length $(\vari|\calM|^{4})^{j-1}$;
        \item\label{algo::gen_stop}
        If every $y_i''$ satisfies
        $\product{y_i} = e_{j+1}$,
        then $w=y_1''y_2''\ldots y_\vari''$
        is a Ramsey decomposition.
        Return the corresponding Ramsey
        $\vari$-decomposition of $u$;
        \item\label{algo::gen_def_u}
        If $\product{y_i''} \neq e_{j+1}$
        for some $1 \leq i \leq n$,
        set $u_{j-1} \coloneqq y_i''$,
        and decrement $j$ by $1$.
    \end{enumerate}
    \end{enumerate}
    \item
    Set $e_1=1_{\calM}$, and return the idempotent elements
    $e_1,e_2, \ldots, e_{n+1} \in \calM$.
\end{itemize}
\textbf{Step \ref{algo::gen_first}.}
We use the auxiliary algorithm to obtain
a decomposition $u_j = xy_1y_2 \ldots y_m z$,
and we build $v$ by concatenating
the reductions of the $y_i$.
Since both $x$ and $z$
absorb each $y_i$,
and in step \ref{algo::gen_def_e}
we define $e_{j+1}$ as
the idempotent power of reduced factors of $v$:
\begin{equation}\label{equ::uJe}
\textup{The word $u_j$,
its prefix $x$
and its suffix $z$ satisfy }
\product{u_j} = \product{xz} = \product{x} \cdot e_{j+1} \cdot \product{z}.
\end{equation}
\textbf{Step \ref{algo::gen_second}.}
We use the auxiliary algorithm
to get a decomposition
$u' = x'y'_1y'_2 \ldots y'_{m'} z'$,
we build $w$ by concatenating
the reductions of the $y_i'$,
and we set $e_{j+1}$ as
the idempotent power of $\product{z'x'}$.
As both $x'$ and $z'$
absorb each $y'_i$,
and in step \ref{algo::gen_def_u}
we define $u_{j-1}$ as
a factor of $w$:
\begin{equation}\label{equ::eHu}
\textup{For every factor $y$ of $u_{j-1}$, }
e_{j+1} \cdot \product{y}
= e_{j+1}
= \product{y} \cdot e_{j+1}.
\end{equation}
\textbf{Step \ref{algo::gen_third}.}
We divide $w$ into $\vari$ factors of equal length.
If each of them reduces to $e_{j+1}$,
they form a Ramsey $\vari$-decomposition
of $w$.
As $w$ is obtained form $u$
by iteratively reducing factors
and dropping prefixes and suffixes,
this decomposition can be transferred back
to a Ramsey $\vari$-decomposition of $u = u_n$.
If one factor does not reduce to $e_{j+1}$,
we assign its value to $u_{j-1}$.
Therefore:
\begin{equation}\label{equ::uNe}
\textup{
The word $u_{j-1}$
does not reduce to
$e_{j+1}$.}
\end{equation}

\subparagraph{Proof of correctness}
To prove that the algorithm behaves as intended,
we show that if it completes
$n$ cycles without returning a
Ramsey $\vari$-decomposition,
then the function
$\phi: \maxSemigroup{n+1} \rightarrow \calM$
defined by $\phi(j) = e_j$ 
is a monomorphism.
Since $e_{j}$ is the idempotent power of
reduced factors of $u_{j-1}$
for all $1 \leq j \leq n$,
Equation (\ref{equ::eHu})
yield that
$e_{j+1}\cdot e_{j}
=e_{j+1}
=e_{j} \cdot e_{j+1}$.
Therefore $\phi$ is a homomorphism.
We conclude by showing that it is injective.
Suppose, towards building a contradiction,
that $\phi(j) = e_{j} = e_{i} = \phi(i)$
for some $1 \leq j<i \leq n$.
Since $\phi$ is a homomorphism,
all the intermediate elements collapse:
in particular $e_{j} = e_{j+1}$.
Then
\[
\product{u_{j-1}}
\underset{(\ref{equ::uJe})}{=}
\product{x}\cdot e_{j}\cdot \product{z}
= \product{x}\cdot e_{j+1}\cdot \product{z}
\underset{(\ref{equ::eHu})}{=}
e_{j+1},
\]
which cannot hold by
Equation (\ref{equ::uNe}).

\section{Regular $\calD$-length of the
monoid of Boolean matrices}\label{sec:reglength}

A Boolean matrix
is a matrix $A$ whose components
are Boolean elements: $A_{ij} \in \{0,1\}$.
The (full) \emph{Boolean matrix monoid}
$\relation{n}$ is the set of all $n \times n$
Boolean matrices,
equipped with the matrix composition defined as follows:
$(A \cdot B)_{ik} = 1$
if and only if there exists $j \in \firstintegers{n}$
satisfying
$A_{ij}=B_{jk}=1$.
This fits the standard matrix multiplication
if we consider that $1+1=1$:
addition of Boolean elements
is the OR operation,
and multiplication is the AND operation.
The main contribution of this section is
the following theorem.
\JLength*
\noindent
The proof is split in two parts.
We prove the upper bound
by studying the structure of the
idempotent elements of $\relation{n}$
(Subsection \ref{subsec::up_relation}).
Then, we prove the lower bound
by constructing a monomorphism
from the max monoid of size $\frac{n^2+n+2}{2}$
into $\relation{n}$
(Subsection \ref{subsec::low_relation}).
We begin by introducing definitions tailored
to help us in the following demonstrations.


\subparagraph{Stable matrix}
A Boolean matrix $A \in \relation{n}$
is called \emph{stable}
if for each component
$A_{ik}$ equal to $1$,
there exists $j \in \firstintegers{n}$
satisfying $A_{ij} = A_{jj} = A_{jk} = 1$.
Idempotent matrices are stable (Appendix \ref{app:mat}).





\subparagraph{Positive set}
A (maximal) \emph{positive set} of
an idempotent matrix
$A \in \relation{n}$
is a maximal set
$I \subseteq \firstintegers{n}$
such that all the corresponding components of $A$
are $1$:
$A_{ij} =1$ for all $i,j \in I$,
and 
for every $k \in \firstintegers{n} \setminus I$,
there exists $i \in I$
such that $A_{ik} = 0$ or $A_{ki}=0$.
The positive sets of an idempotent matrix
are disjoint (Appendix \ref{app:mat}),
hence $A$ has at most $n$
positive sets.

\subparagraph{Free pair}
For each idempotent matrix $A \in \relation{n}$
we define the relation $\refrel_A$ on
$\firstintegers{n}$ as follows:
given
$i,j \in \firstintegers{n}$,
we have $i \refrel_A j$
if for all 
$i_2,j_2 \in \firstintegers{n}$,
$A_{i_2i} = 1 = A_{jj_2}$ implies $A_{i_2j_2} = 1$.
A \emph{free pair} of $A$
is a set of two distinct elements
$i,j \in \firstintegers{n}$
incomparable by $\refrel_A$:
$i\not\refrel_A j$
and $j\not\refrel_A i$.
Note that $A$ has at most $\frac{n(n-1)}{2}$
free pairs (all sets of two distinct elements in $\firstintegers{n}$).
Let us state some observations concerning
$\refrel_A$ that follow immediately
from the definition.
First, as $A$ is idempotent,
$\refrel_A$ is reflexive (Appendix \ref{app:mat}).
However, it might not be transitive.
Moreover, for every component $A_{ij}$ of $A$
equal to $1$,
we have that $i\refrel_A j$ (Appendix \ref{app:mat}).
The converse implication is not true,
as shown by the following example.
Finally, for every $i \in \firstintegers{n}$,
if the $i^{\textup{th}}$ row contains
no $1$,
i.e., $A_{ik} = 0$
for all $k \in \firstintegers{n}$,
then $i\refrel_A j$ for every $j \in \firstintegers{n}$.
Conversely,
if the $i^{\textup{th}}$ column contains
no $1$,
then $j\refrel_A i$
for every $j \in \firstintegers{n}$.

\subparagraph{Example}
We depict below a submonoid of $\relation{4}$
generated by two matrices $A$ and $B$.
The six elements of this submonoid,
including the identity matrix $D \in \relation{n}$,
are all idempotent.
Under each matrix, we list its positive sets.
We then compute the corresponding free pairs.

  \begin{center}
    \begin{tikzpicture}
      \foreach \c in {-1,0,1,2,3,4}{
        \pgfmathtruncatemacro{\stex}{\c}
        \coordinate (i\c) at
            (2.35*\stex+0.375,-1.1);
        \foreach \d in {0,1,2,3}{
            \foreach \e in {0,1,2,3}{
                \coordinate (a\c\d\e) at
                (2.35*\c + 0.25*\d, -0.25*\e);
                \ifthenelse{\c>-1}{
                \ifthenelse{\e=1 \and \d=0}
                {\node
                [draw,thick,circle,inner sep=1]
                (m\c\d\e) at (a\c\d\e)
                {};}{}
                \ifthenelse{\e=1 \and \d=2}
                {\node
                [draw,thick,circle,inner sep=1]
                (m\c\d\e) at (a\c\d\e)
                {};}{}
                \ifthenelse{\e=3 \and \d=0}
                {\node
                [draw,thick,circle,inner sep=1]
                (m\c\d\e) at (a\c\d\e)
                {};}{}
                \ifthenelse{\e=3 \and \d=2}
                {\node
                [draw,thick,circle,inner sep=1]
                (m\c\d\e) at (a\c\d\e)
                {};}{}
                \ifthenelse{\e=0 \and \d=0}
                {\draw[thick] ($(a\c\d\e)+(0,0.06)$)--($(a\c\d\e)-(0,0.06)$);}{}
                \ifthenelse{\e=3 \and \d=3}
                {\draw[thick] ($(a\c\d\e)+(0,0.06)$)--($(a\c\d\e)-(0,0.06)$);}{} 
                \ifthenelse{\e=0 \and \d=2}
                {\draw[thick] ($(a\c\d\e)+(0,0.06)$)--($(a\c\d\e)-(0,0.06)$);}{} 
                \ifthenelse{\e=1 \and \d=3}
                {\draw[thick] ($(a\c\d\e)+(0,0.06)$)--($(a\c\d\e)-(0,0.06)$);}{} 
                \ifthenelse{\c>0 \and \e=0 \and \d=1}
                {\draw[thick] ($(a\c\d\e)+(0,0.06)$)--($(a\c\d\e)-(0,0.06)$);}{} 
                \ifthenelse{\c>0 \and \e=0 \and \d=3}
                {\draw[thick] ($(a\c\d\e)+(0,0.06)$)--($(a\c\d\e)-(0,0.06)$);}{} 
                \ifthenelse{\c>0 \and \e=2 \and \d=3}
                {\draw[thick] ($(a\c\d\e)+(0,0.06)$)--($(a\c\d\e)-(0,0.06)$);}{} 
                }
                {
                \ifthenelse{\e=\d}
                {\draw[thick] ($(a\c\d\e)+(0,0.06)$)--($(a\c\d\e)-(0,0.06)$);}
                {\node
                [draw,thick,circle,inner sep=1]
                (m\c\d\e) at (a\c\d\e)
                {};}
                }
            }
        }
        \ifthenelse{\c>-1}{
        \ifthenelse{\c=1 \OR \c=3}
        {\node[draw,thick,circle,inner sep=1]
        (m\c02) at (a\c02)
        {};
        \node[draw,thick,circle,inner sep=1]
        (m\c22) at (a\c22)
        {};}
        {\draw[thick] ($(a\c02)+(0,0.06)$)--($(a\c02)-(0,0.06)$);
        \draw[thick] ($(a\c22)+(0,0.06)$)--($(a\c22)-(0,0.06)$);}
        
        \ifthenelse{\c=1 \OR \c=2}
        {\node[draw,thick,circle,inner sep=1]
        (m\c11) at (a\c11)
        {};
        \node[draw,thick,circle,inner sep=1]
        (m\c13) at (a\c13)
        {};}
        {\draw[thick] ($(a\c11)+(0,0.06)$)--($(a\c11)-(0,0.06)$);
        \draw[thick] ($(a\c13)+(0,0.06)$)--($(a\c13)-(0,0.06)$);}
        
        \ifthenelse{\c=0}
        {\node[draw,thick,circle,inner sep=1]
        (m\c10) at (a\c10)
        {};
        \node[draw,thick,circle,inner sep=1]
        (m\c30) at (a\c30)
        {};
        \node[draw,thick,circle,inner sep=1]
        (m\c32) at (a\c32)
        {};}
        {}
        
        \ifthenelse{\c=0 \OR \c=1}
        {\node[draw,thick,circle,inner sep=1]
        (m\c12) at (a\c12)
        {};}
        {\draw[thick] ($(a\c12)+(0,0.06)$)--($(a\c12)-(0,0.06)$);}
        }{}
      }
     \node[](j-1) at (i-1) {\scriptsize $D$};
     \node[](k-1) at ($(i-1)-(0,0.35)$)
     {\scriptsize $\{1\},\{2\},\{3\},\{4\}$};
     \node[](j0) at (i0) {\scriptsize $A$};
     \node[](k0) at ($(i0)-(0,0.35)$)
     {\scriptsize $\{1,3\},\{2,4\}$};
     \node[](j1) at (i1) {\scriptsize $B$};
     \node[](k1) at ($(i1)-(0,0.35)$)
     {\scriptsize $\{1\},\{4\}$};
     \node[](j2) at (i2) {\scriptsize $A \cdot B$};
     \node[](k2) at ($(i2)-(0,0.35)$)
     {\scriptsize $\{1,3\},\{4\}$};
     \node[](j3) at (i3) {\scriptsize $B \cdot A$};
     \node[](k3) at ($(i3)-(0,0.35)$)
     {\scriptsize $\{1\},\{2,4\}$};
     \node[](j4) at (i4)
     {\scriptsize $A \cdot B \cdot A$};
     \node[](k4) at ($(i4)-(0,0.35)$)
     {\scriptsize $\{1,3\},\{2,4\}$};
    \end{tikzpicture}
  \end{center}
Every pair is free in $D$
since the relation $\refrel_{D}$ is the identity:
given two distinct elements $i,j \in \firstintegers{n}$,
we have $D_{ii} = 1 = D_{jj}$,
yet $D_{ij} = 0$, hence $i \not\refrel_{D} j$.
On the contrary,
the four matrices $B$, $A \cdot B$, $B \cdot A$
and $A \cdot B \cdot A$ has no free pairs:
the relation $\refrel_B$ only lacks $(4,1)$,
$\refrel_{A \cdot B}$ only lacks $(4,1)$ and $(4,3)$,
$\refrel_{B \cdot A}$ only lacks $(2,1)$ and $(4,1)$,
$\refrel_{A \cdot B \cdot A}$ only lacks
$(2,1)$, $(4,1)$ and $(4,3)$.
Finally, for $A$,
the relation $\refrel_A$ is the union of
the identity and the four pairs
$\{(1,3),(3,1),(2,4),(4,2)\}$,
which yields the free pairs
$\{1,2\}$, $\{1,4\}$, $\{2,3\}$ and $\{3,4\}$.

\subsection{Upper bound}
\label{subsec::up_relation}
To prove the upper bound of
Theorem \ref{thm:JLength},
we show that every monomorphism
$\phi: \maxSemigroup{m} \rightarrow \relation{n}$
satisfies
$m \leq \frac{n^2+n+2}{2}$.
To this end,
we study the sequence of matrices
$s_{\phi}=A_1,A_2, \ldots, A_m$
obtained by listing the elements
$\phi(i)=A_i$ of the image of $\phi$.
Note that
all the elements of $s_{\phi}$ are distinct
as $\phi$ is injective,
and $A_i \cdot A_{i+1} = A_{i+1} = A_{i+1} \cdot A_i$
for all $1 \leq i < m$
as $\phi$ is a homomorphism.
We introduce three lemmas 
that imply interesting properties
of every pair $A_i,A_{i+1}$
of successive matrices of $s_{\phi}$.
First, Lemma~\ref{lemma:clique_containment}
shows that
every positive set of $A_{i+1}$ contains a positive
set of $A_i$.
Therefore, since positive sets are disjoint,
the number of positive sets can
never increase along $s_{\phi}$.
Second, Lemma~\ref{lemma:incomparable_containment}
shows that
every free pair of $A_{i+1}$ is also
a free pair of $A_{i}$.
As a consequence, the number of free pairs can
never increase along $s_{\phi}$.
Finally, Lemma~\ref{prop::upper_bound}
shows that either the number of positive sets
or free pairs differs between
$A_i$ and $A_{i+1}$,
as otherwise these two matrices would be equal.

Combining the three lemmas
yields that between each pair of successive
matrices of $s_{\phi}$,
neither the number of positive sets
nor the number of free pairs increases,
and at least one decreases.
This immediately
implies the desired upper bound:
as the number of positive sets 
of matrices of $\relation{n}$
ranges from $0$ to $n$
and the number of free pairs ranges from
$0$ to $\frac{n(n-1)}{2}$,
$s_\phi$ contains at most
$n+\frac{n(n-1)}{2}+1 = \frac{n^2+n+2}{2}$
matrices.
To conclude, we now proceed with the formal statements
and the proofs of the three lemmas.

\begin{lemma}\label{lemma:clique_containment}
Let $A$ and $B$
be two idempotent matrices of $\relation{n}$
satisfying $A \cdot B= B = B \cdot A$.
Then every positive set of $B$
contains a positive set of $A$.
\end{lemma}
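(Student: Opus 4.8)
The plan is to argue directly with the Boolean entries, exploiting two consequences of idempotency: \emph{transitivity} (from $A^2=A$ we get $A_{ij}=A_{jk}=1 \Rightarrow A_{ik}=1$, and likewise for $B$), together with the two hypotheses $A\cdot B=B$ and $B\cdot A=B$, which read $B_{ik}=1 \Rightarrow \exists p\colon A_{ip}=B_{pk}=1$ and $B_{ik}=1 \Rightarrow \exists q\colon B_{iq}=A_{qk}=1$. Fix a positive set $J$ of $B$ and an element $j \in J$, so that $B_{jj}=1$. I would first locate an idempotent-diagonal element $a$ of $A$ \emph{inside} $J$, and then show that the whole positive set of $A$ containing $a$ stays inside $J$.

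First I would produce an element $a$ with $A_{aa}=1$, $A_{ja}=1$ and $B_{aj}=1$. Starting from $p_0=j$ and repeatedly applying $A\cdot B=B$ to the fact $B_{p_ij}=1$, I obtain a sequence $p_0,p_1,\dots$ with $A_{p_ip_{i+1}}=1$ and $B_{p_{i+1}j}=1$. Transitivity of $A$ propagates $A_{p_ip_{i'}}=1$ for all $i<i'$, and finiteness forces a repetition $p_i=p_{i'}$, which in turn forces $A_{p_ip_i}=1$; taking $a=p_i$ gives $A_{aa}=1$, and by construction $B_{aj}=1$ and (via the chain from $p_0$) $A_{ja}=1$. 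Symmetrically, applying $B\cdot A=B$ produces an element $b$ with $A_{bb}=1$, $A_{bj}=1$ and $B_{jb}=1$.

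The key step is then to certify that $a$ actually lies in $J$, i.e. that $B_{ja}=1$ as well. This is where the two one-sided walks must be combined: from $A_{bj}=1$ and $A_{ja}=1$ transitivity gives $A_{ba}=1$, and feeding $B_{jb}=1$ together with $A_{ba}=1$ into $B\cdot A=B$ yields $B_{ja}\ge B_{jb}A_{ba}=1$. Since $B_{aj}=B_{ja}=1$ and $B$ is transitive, $a$ is $B$-linked in both directions to every element of $J$, so $J\cup\{a\}$ is a clique and maximality of $J$ forces $a\in J$. I expect this to be the main obstacle: a reachable $A$-idempotent element need not belong to $J$ (a two-state example already shows that an $a$ with $A_{aa}=B_{aj}=1$ can satisfy $B_{ja}=0$), so it is essential to use \emph{both} hypotheses in order to pin down an element that is $B$-related to $j$ in both directions.

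Finally I would enlarge $a$ to its entire positive set. Let $I$ be the positive set of $A$ containing $a$; for any $c\in I$ we have $A_{ac}=A_{ca}=1$. Then $A\cdot B=B$ gives $B_{cj}\ge A_{ca}B_{aj}=1$ and $B\cdot A=B$ gives $B_{jc}\ge B_{ja}A_{ac}=1$, so $c$ too is $B$-linked in both directions to $j$, hence to all of $J$; maximality of $J$ yields $c\in J$. As $c\in I$ was arbitrary, $I\subseteq J$, exhibiting the desired positive set of $A$ contained in $J$.
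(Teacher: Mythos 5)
Your proof is correct, and it shares the paper's overall strategy --- produce an element $a$ with $A_{aa}=1$ that is $B$-linked in both directions to the chosen positive set of $B$, then use the absorption identities to drag the whole positive set of $A$ containing $a$ inside it --- but the route to that element is different. The paper applies $B=B\cdot A$ once to obtain a $k$ with $B_{jk}=A_{kj}=1$ and then invokes stability of idempotent matrices (Lemma~\ref{lemma:characterisation_idempotent}) to get, in a single step, a diagonal element of $A$ linked to $j$ in both directions; the required entries of $B$ then fall out of a direct computation with the products. You instead re-derive the stability ingredient inline, running one iterated walk through $A\cdot B=B$ and a second through $B\cdot A=B$, and stitching the two resulting diagonal elements $a$ and $b$ together with transitivity of $A$ to obtain $B_{ja}=1$. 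Your version is self-contained (it never needs the stability lemma) and correctly isolates why \emph{both} hypotheses are indispensable --- your two-state caveat is exactly the failure mode the combination step is designed to avoid --- while the paper's version is shorter because the walk-and-repeat argument has already been packaged once into the stability lemma and is reused here as a black box. Both arguments are complete.
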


\begin{proof}
Let us pick two idempotent matrices
$A,B \in \relation{n}$
satisfying $A \cdot B= B = B \cdot A$.
If $B$ has no positive sets,
the statement is trivially satisfied.
Now let us suppose that $B$ has
at least one positive set
$I \subseteq \firstintegers{n}$.
We show the existence of
a positive set $J \subseteq I$ of $A$.

Since $I$ is not empty by definition,
it contains an element $i$,
and $B_{ii}=1$.
Then, as $B = B \cdot A$,
there exists $k \in \firstintegers{n}$
satisfying $B_{ik} = A_{ki} =1$.
Moreover, as $A$ is stable,
there exists $j \in \firstintegers{n}$
satisfying $A_{kj} = A_{jj} =A_{ji} =1$.
In particular, $A_{jj}=1$,
hence $A$ has a positive set
$J$ containing $j$.
Then, for every $i_2 \in I$
and every $j_2,j_3 \in J$,
we obtain
\[
\begin{array}{ll}
B_{j_2i_2} = (A \cdot A \cdot B)_{j_2i_2} = 1 
\textup{ since }
A_{j_2j}=A_{ji}=B_{ii_2}
=
1,\\
B_{i_2j_3}=(B\cdot B \cdot A \cdot A)_{i_2j_3}=1
\textup{ since }
B_{i_2i}=B_{ik}=A_{kj}=A_{jj_3}
=
1,\\
B_{j_2j_3}=(B \cdot B)_{j_2j_3}=1
\textup{ since }
B_{j_2i_2}=B_{i_2j_3}
=
1.
\end{array}
\]
As a consequence,
$J$ is a subset of $I$
since positive sets are maximal by definition.
\end{proof}

\begin{lemma}\label{lemma:incomparable_containment}
Let $A$ and $B$
be two idempotent matrices of $\relation{n}$
satisfying $A \cdot B= B = B \cdot A$.
Then every free pair of $B$
is a free pair of $A$.
\end{lemma}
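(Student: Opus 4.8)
The plan is to reduce the statement to a single implication relating the relations $\refrel_A$ and $\refrel_B$, namely that for every ordered pair $i,j \in \firstintegers{n}$ we have $i \refrel_A j \implies i \refrel_B j$ (so that $\refrel_A$ is contained in $\refrel_B$). Granting this, the lemma follows by contraposition: if $\{i,j\}$ is a free pair of $B$, then $i \not\refrel_B j$ and $j \not\refrel_B i$, hence neither $i \refrel_A j$ nor $j \refrel_A i$ can hold, since either would force the corresponding relation to hold in $B$. Thus $\{i,j\}$ is incomparable by $\refrel_A$, i.e. a free pair of $A$.

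To prove the implication, I would fix a pair with $i \refrel_A j$ and verify directly the defining condition of $i \refrel_B j$: given $i_2,j_2$ with $B_{i_2 i} = 1 = B_{j j_2}$, I must show $B_{i_2 j_2} = 1$. The idea is to move from $i_2$ to $j_2$ across a single edge of $A$ supplied by the hypothesis. Applying the equation $B = B \cdot A$ to $B_{i_2 i} = 1$ produces an index $p$ with $B_{i_2 p} = A_{pi} = 1$, and applying $B = A \cdot B$ to $B_{j j_2} = 1$ produces an index $q$ with $A_{jq} = B_{q j_2} = 1$. Now $A_{pi}=1$ and $A_{jq}=1$ are exactly the two premises appearing in the definition of $i \refrel_A j$ (instantiated at $p$ and $q$), so that hypothesis yields $A_{pq}=1$. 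Finally, the identity $B = B \cdot A \cdot B$, which holds because $A \cdot B = B$ and $B$ is idempotent give $B \cdot A \cdot B = B \cdot B = B$, lets me read $B_{i_2 j_2}$ as $(B \cdot A \cdot B)_{i_2 j_2}$; the path $i_2 \to p \to q \to j_2$ witnessed by $B_{i_2 p} = A_{pq} = B_{q j_2} = 1$ then forces $B_{i_2 j_2} = 1$, completing the verification of $i \refrel_B j$.

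The main obstacle I anticipate is choosing the bridge correctly: the relation $\refrel_A$ only constrains entries of the shape $A_{(\cdot)\,i}$ and $A_{j\,(\cdot)}$, so the witnesses $p$ and $q$ must be produced precisely so that $A_{pi}$ and $A_{jq}$ are the entries fed into the hypothesis. This dictates the asymmetric use of the two factorisations, $B = B \cdot A$ on the left end and $B = A \cdot B$ on the right end; swapping them would place $p$ and $q$ on the wrong sides of $i$ and $j$, and $i \refrel_A j$ would not apply. Once this alignment is fixed, the remainder is routine Boolean-matrix bookkeeping, and, in contrast with Lemma~\ref{lemma:clique_containment}, no appeal to positive sets or to stability is required.
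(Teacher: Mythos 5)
Your proposal is correct and follows essentially the same route as the paper's proof: both argue by contraposition via the implication $i \refrel_A j \Rightarrow i \refrel_B j$, produce the two bridge indices from the factorisations $B = B \cdot A$ and $B = A \cdot B$ respectively (your $p,q$ are the paper's $i_1,j_1$), and conclude with the identity $B = B \cdot A \cdot B$. No gaps.
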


\begin{proof}
Let us pick two idempotent matrices
$A,B \in \relation{n}$
satisfying $A \cdot B= B = B \cdot A$.
We prove the lemma by contraposition:
we show that for every pair of elements
$i,j \in \firstintegers{n}$,
$i \refrel_A j$ implies $i \refrel_B j$
(hence if $i$ and $j$ are incomparable
by $\refrel_B$, so are they by $\refrel_A$).

Let us pick $i,j \in \firstintegers{n}$
satisfying $i \refrel_A j$,
and $i_2,j_2 \in \firstintegers{n}$ 
satisfying
$B_{i_2i} = 1 = B_{jj_2}$.
To conclude,
we show that
$B_{i_2j_2}=1$.
To this end, we introduce two new elements
$i_1,j_1 \in \firstintegers{n}$:
First, as $(B \cdot A)_{i_2i}=B_{i_2i}=1$,
there exists $i_1\in \firstintegers{n}$
such that $B_{i_2i_1}=1$
and $A_{i_1i}=1$;
Second,
as $(A\cdot B)_{jj_2} = B_{jj_2}=1$,
there exists $j_1 \in \firstintegers{n}$
such that $A_{jj_1}=1$
and $B_{j_1j_2}=1$.
Then, as $i \refrel_A j$
by supposition,
we get that $A_{i_1j_1}=1$,
which implies
\[
B_{i_2j_2}
=(B \cdot A \cdot B)_{i_2j_2}
= 1,
\textup{ since }
B_{i_2i_1} = A_{i_1j_1}
=
B_{j_1j_2}
=
1.
\]
Since this holds for every
$i_2,j_2 \in \firstintegers{n}$ 
satisfying
$B_{i_2i} = 1 = B_{jj_2}$,
we obtain that
$i \refrel_B j$.
\end{proof}

\begin{lemma}\label{prop::upper_bound}
Let $A$ and $B$
be two idempotent matrices of $\relation{n}$
satisfying $A \cdot B= B = B \cdot A$.
If $A$ and $B$ have the same number
of positive sets and free pairs,
then they are equal.
\end{lemma}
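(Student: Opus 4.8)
The plan is to combine the two preceding lemmas with the equal-count hypotheses to pin down the \emph{invariants} of $A$ and $B$, and then argue that these invariants determine the matrix. First I would upgrade the inclusions of Lemmas~\ref{lemma:clique_containment} and~\ref{lemma:incomparable_containment} to equalities. Since positive sets are disjoint and nonempty, no two positive sets of $B$ can contain the same positive set of $A$; hence the containment map ``positive set of $B$ $\mapsto$ a positive set of $A$ inside it'' is injective, and equality of the numbers of positive sets makes it a bijection $J_\ell \subseteq I_\ell$. Likewise, the proof of Lemma~\ref{lemma:incomparable_containment} gives $\refrel_A \subseteq \refrel_B$ and that every free pair of $B$ is a free pair of $A$; with equal free-pair counts these two sets of free pairs coincide.

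The technical core is then to promote these to genuine equalities: the positive sets satisfy $J_\ell = I_\ell$ for every $\ell$, and $\refrel_A = \refrel_B$. For the positive sets, I would assume towards a contradiction that some $x \in I_\ell \setminus J_\ell$ exists. Inside the $B$-positive set $I_\ell$ every pair is comparable in both directions, so $x$ is comparable in $\refrel_B$ to all of $J_\ell$; on the $A$ side, maximality and disjointness of the positive sets of $A$ force $x$ to sit outside $J_\ell$ in a way that destroys comparability, so I would exhibit an element $y \in I_\ell$ for which $\{x,y\}$ is free in $A$ but not free in $B$ (handling separately the cases $A_{xx}=1$ and $A_{xx}=0$). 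This contradicts the coincidence of the free pairs. A symmetric/parallel argument, together with $\refrel_A \subseteq \refrel_B$ and the equal free pairs, yields $\refrel_A = \refrel_B$.

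Finally I would reconstruct the matrices from these invariants. The key observation, using stability and idempotency, is the characterisation
\[
A_{ik}=1 \iff \exists\, j \in \firstintegers{n} \text{ with } A_{jj}=A_{ij}=A_{jk}=1,
\]
so every $1$-entry routes through an index lying in a positive set; moreover, for two indices $i,j$ with $A_{ii}=A_{jj}=1$ one has $A_{ij}=1 \iff i \refrel_A j$. Once the positive sets of $A$ and $B$ are literally equal and $\refrel_A=\refrel_B$, the entries between diagonal indices agree, and then the characterisation (applied to both $A$ and $B$) propagates this agreement to arbitrary entries, giving $A_{ik}=B_{ik}$ for all $i,k$, i.e.\ $A=B$. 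The main obstacle is the middle step: upgrading the inclusions of the two lemmas to equalities. Because $\refrel_A$ need not be transitive, comparability cannot be chased freely, and the difficulty lies in choosing, in each case, exactly the free pair (or positive set) whose creation or destruction contradicts the equal-count hypothesis.
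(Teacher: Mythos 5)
Your opening moves coincide with the paper's: Lemmas~\ref{lemma:clique_containment} and~\ref{lemma:incomparable_containment} together with the pigeonhole principle give that each positive set of $B$ contains exactly one positive set of $A$, that each positive set of $A$ lies inside one of $B$, and that the free pairs of $A$ and $B$ coincide. From there your plan diverges, and it has two genuine gaps. The first is the step you yourself flag as the main obstacle: promoting the containments to the literal equalities ``positive sets of $A$ $=$ positive sets of $B$'' and ``$\refrel_A=\refrel_B$''. These equalities are true, but essentially because they follow from $A=B$; you give no actual construction of the element $y$ for which $\{x,y\}$ is free in $A$, and producing one means exhibiting indices $i_2,j_2$ with $A_{i_2x}=1=A_{yj_2}$ and $A_{i_2j_2}=0$, which requires the identities $A\cdot B=B=B\cdot A$ in an essential way --- at which point you are redoing the whole proof rather than reducing it. The paper sidesteps this entirely: it never upgrades the containments, but transfers $1$-entries directly, using stability together with $B=A\cdot B\cdot A$ to get $A_{ik}=1\Rightarrow B_{ik}=1$ from the containment of positive sets alone, and a case analysis on $i_1\refrel_A j_1$ versus $j_1\refrel_A i_1$ (fed by the identity of free pairs and the uniqueness part of the pigeonhole argument) for the converse.

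The second gap is in your final reconstruction, and it is not repairable as stated: the pair (positive sets, $\refrel_A$) does \emph{not} determine an idempotent matrix. Take $n=2$ and the idempotent matrices $A=\bigl(\begin{smallmatrix}0&0\\1&1\end{smallmatrix}\bigr)$ and $A'=\bigl(\begin{smallmatrix}0&1\\0&1\end{smallmatrix}\bigr)$: both have the single positive set $\{2\}$, and for both the relation $\refrel$ is the full relation on $\{1,2\}$ (hence neither has a free pair), yet $A\neq A'$. The reason your propagation breaks is that the characterisation $A_{ik}=1\iff\exists j\,(A_{ij}=A_{jj}=A_{jk}=1)$ reduces $A$ to entries $A_{ij}$ and $A_{jk}$ in which only \emph{one} of the two indices is diagonal, and such entries are not covered by ``the entries between diagonal indices agree''. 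Any correct finish must re-invoke $A\cdot B=B=B\cdot A$ at this stage; in the paper this happens through the repeated use of $B_{ik}=(A\cdot B\cdot A)_{ik}$ and $A_{ij}=(A\cdot A\cdot A)_{ij}$.
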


\begin{proof}
Let us pick two idempotent elements
$A,B \in \relation{n}$
such that $A \cdot B= B = B \cdot A$.
Suppose that $A$ and $B$
have the same number of positive sets.
By Lemma~\ref{lemma:clique_containment},
each positive set of $B$ contains
at least one positive set of $A$.
Since the positive sets of $B$
are disjoint, the pigeonhole principle
yields the two following claims.
\begin{customclaim}{1}
Each positive set of $A$ is contained
in a positive set of $B$.
\end{customclaim}
\begin{customclaim}{2}
Each positive set of $B$ contains
exactly one positive set of $A$.
\end{customclaim}
\noindent
Moreover, suppose that $A$ and $B$
have the same number of free pairs.
By Lemma \ref{lemma:incomparable_containment}
every free pair of $B$
is a free pair of $A$.
This yields the following claim.
\begin{customclaim}{3}
The free pairs of $A$ and $B$ are identical.
\end{customclaim}

We now prove that $A = B$.
First, we show that
for every component $A_{ik}$
equal to $1$,
the corresponding component $B_{ik}$
is also equal to $1$.
Since $A$ is stable,
there exists $j \in \firstintegers{n}$
satisfying $A_{ij} = A_{jj} = A_{jk} = 1$.
Then $j$ is contained in a positive set of $A$,
which is itself contained in a positive set of $B$
by Claim~1.
Therefore we obtain that $B_{jj}=1$,
which yields
\[
B_{ik} = (A \cdot B \cdot A)_{ik} = 1,
\textup{ since }
A_{ij}=B_{jj}=A_{jk}=1.
\]

To conclude,
we show that for every component $B_{ij}$
equal to $1$,
the corresponding component $A_{ij}$
is also equal to $1$.
To this end,
we introduce four new elements
$i_1,i_2,j_1,j_2$ in $\firstintegers{n}$:
First, as $(A \cdot B \cdot A)_{ij} = B_{ij} = 1$,
there exist $i_2,j_2 \in \firstintegers{n}$
such that $A_{ii_2}=B_{i_2j_2}=A_{j_2j}=1$.
Second, as $A$ is stable,
there exist $i_1,j_1 \in \firstintegers{n}$
such that
$A_{ii_1}=A_{i_1i_1}=A_{i_1i_2}=1$
and $A_{j_2j_1}=A_{j_1j_1}=A_{j_1j}=1$.
These definitions ensure that
\[
B_{i_1j_1}
=(A\cdot B \cdot A)_{i_1j_1} = 1,
\textup{ since }
A_{i_1i_2}=B_{i_2j_2}=A_{j_2j_1}=1.
\]
Note that, as observed
after the definition
of the relation induced by an idempotent matrix,
this implies that $i_1 \refrel_B j_1$.
We derive from this that either
$i_1 \refrel_A j_1$ or $j_1 \refrel_A i_1$:
if $i_1=j_1$ this follows from the fact that
$\refrel_A$ is reflexive,
and if $i_1 \neq j_1$ 
this follows from Claim~3.
We show that both possibilities
lead to $A_{ij}=1$.
\begin{itemize}
\item 
If $i_1 \refrel_A j_1$,
then we obtain $A_{i_1j_1}=1$
as $A_{i_1i_1}=1=A_{j_1j_1}$.
Therefore,
\[
A_{ij}=(A \cdot A \cdot A)_{ij}=1
\textup{ since }
A_{ii_1} = A_{i_1j_1} = A_{j_1j} = 1.
\]
\item
If $j_1 \refrel_A i_1$,
then we obtain $A_{j_1i_1}=1$
as $A_{j_1j_1}=1=A_{i_1i_1}$.
Therefore,
\[
B_{j_1i_1}=(A \cdot B \cdot A)_{j_1i_1}=1
\textup{ since }
A_{j_1i_1} = B_{i_1j_1} = A_{j_1i_1} = 1.
\]
As a consequence,
$i_1$ and $j_1$ are in
the same positive set of $B$.
Moreover, as $A_{i_1i_1}=A_{j_1j_1}=1$,
both $i_1$ and $j_1$ are elements
of positive sets of $A$.
Combining these two statements with Claim~2
yields that $i_1$ and $j_1$ are
in the same positive set of $A$.
Therefore $A_{i_1j_1}=1$,
which implies that $i_1 \refrel_A j_1$,
and we can conclude as in the previous point.
\end{itemize}
Since we successfully showed that every
$1$ of $A$ corresponds to a $1$ of $B$,
and reciprocally,
we obtain that $A = B$, which proves the statement.
\end{proof}

\subsection{
Lower bound}\label{subsec::low_relation}
We construct a
monomorphism
$\phi$
between the max monoid $\maxSemigroup{f(n)}$,
where $f(n) = \frac{n^2+n+2}{2}$,
and the monoid of Boolean matrices $\relation{n}$.
The construction is split in two steps.
First, we define $\phi$
over the domain
$[1,g(n) + 1]$,
where $g(n) = \frac{n(n-1)}{2}$
is the number of
pairs of elements $i<j$ in $\firstintegers{n}$.
Then,
we complete the definition
over the domain
$[g(n) + 1,f(n)]$.


\subparagraph{Diagonal to triangular}
Let us define $\phi$ over
$[1,g(n)+1]$.
We map the neutral element
$1 \in \maxSemigroup{f(n)}$
to the neutral element
$D_n \in \relation{n}$:
the identity matrix.
Then, we map
$g(n)+1 \in \maxSemigroup{f(n)}$ 
to the full upper triangular matrix
$U_n \in \relation{n}$.
Note that $U_n$ contains $g(n)$
more $1$'s than $D_n$ does.
We define the images
of the elements between $1$ and $g(n)+1$
by gradually adding to $D_n$
the $1$'s of $U_n$ it lacks.
Formally, we order the indices
corresponding to the components above the diagonal
$p_1 < p_2 < \ldots < p_{g(n)}
\in \firstintegers{n} \times \firstintegers{n}$
according to the lexicographic order:
$(i,j)$ comes before $(i',j')$
if either $i<i'$, or $i = i'$ and $j < j'$.
Then, for every $m \in [1,g(n)+1]$,
we construct the image $\phi(m) \in \relation{n}$
as follows:
\begin{itemize}[nolistsep]
    \item 
    Every component $(\phi(m))_{ii}$ of the diagonal
    is $1$;
    \item 
    Every component $(\phi(m))_{ij}$ below the diagonal
    is $0$;
    \item 
    Every component $(\phi(m))_{ij}$ above the diagonal
    is $1$
    if $(i,j) < p_m$, and $0$ otherwise.
\end{itemize}

\subparagraph{Triangular to empty}
Let us define $\phi$ over
$[g(n)+1,f(n)]$.
To fit the first
part of the definition,
we map
$g(n) + 1  \in \maxSemigroup{f(n)}$
to the upper diagonal matrix
$U_n \in \relation{n}$.
Then, we map the absorbing element
$f(n) = g(n) + 1 + n \in \maxSemigroup{f(n)}$
to the absorbing element
$0_n \in \relation{n}$: the null matrix.
Finally, for $m \in [0,n]$,
we construct $\phi(g(n)+1+m)$
by replacing the last $m$ rows of $U_n$
with $0$'s.
Formally, we have:
\begin{itemize}[nolistsep]
    \item 
    Every component $(\phi(g(n)+1+m))_{ij}$
    is $1$
    if $i\leq j$ and $i \leq n-m$,
    and $0$ otherwise.
\end{itemize}

\subparagraph{Proof of correctness}
We prove that the function $\phi$
just defined is a monomorphism.

We show that
$\phi$ is a homomorphism:
$\phi(m) \cdot \phi(m') = \phi(m') =
\phi(m') \cdot \phi(m)$
for
all $1 \leq m \leq m' \leq f(n)$.
First,
note that if
$(\phi(m'))_{ij} = 1$,
then
$(\phi(m) \cdot \phi(m'))_{ij}
= (\phi(m') \cdot \phi(m))_{ij} = 1$:
if $m \leq g(n)+1$,
this follows from the fact that
the diagonal of $\phi(m)$ is filled with $1$'s,
and if $m > g(n)+1$,
since $m \leq m'$ we obtain that
$(\phi(m))_{ii} =(\phi(m'))_{ij} =1
=(\phi(m'))_{ii} =(\phi(m))_{ij}$.
It remains to show that
if
$(\phi(m) \cdot \phi(m'))_{ik} =1$
or
$(\phi(m') \cdot \phi(m))_{ik} = 1$,
then 
$(\phi(m'))_{ik} = 1$.
If $m' \leq g(n)+1$, this holds since
for every triple
$i \leq j \leq k \in \firstintegers{n}$,
the pair $(i,k)$
is lexicographically smaller than
or equal to
$(j,k)$.
If $m' > g(n)+1$,
this holds since
for every triple
$i \leq j \leq k \in \firstintegers{n}$,
trivially $i$ is
smaller than or equal to
both $i$ and $j$.

We conclude by showing that $\phi$ is injective:
between $\phi(1)$ and $\phi(g(n)+1)$ a new
$1$ is added at each step,
and between $\phi(g(n)+1)$ and $\phi(f(n))$
we remove at each step a $1$
of the diagonal
that was present in all the previous
images.

\subparagraph{Example}
We depict the monomorphism
$\phi: \maxSemigroup{f(n)} \rightarrow \relation{n}$
in the case $n=4$
by listing the $f(4)=11$
elements of its image in $\relation{4}$.
Under each element, we state its number of positive sets
followed by its number of free pairs.

  \begin{center}
      
      
    
    \begin{tikzpicture}
      \foreach \c in {0,1,2,3,4,5,6,7,8,9,10}{
        \pgfmathtruncatemacro{\stex}{\c}
        \coordinate (i\c) at
            (1.25*\stex+0.375,-1.1);
        \foreach \d in {0,1,2,3}{
            \foreach \e in {0,1,2,3}{
                \coordinate (a\c\d\e) at
                (1.25*\c + 0.25*\d, -0.25*\e);
                \ifthenelse{\d<\e}
                {\node
                [draw,thick,circle,inner sep=1]
                (m\c\d\e) at (a\c\d\e)
                {};}
                {}
                
            }
            \pgfmathtruncatemacro{\bound}{10-\c}
            \ifthenelse{\d<\bound}
            {\draw[thick] ($(a\c\d\d)+(0,0.06)$)--($(a\c\d\d)-(0,0.06)$);}
            {\node[draw,thick,circle,inner sep=1]
            (m\c\d\d) at (a\c\d\d)
            {};}
        }
        \ifthenelse{0<\c \and \c<10}
        {\draw[thick] ($(a\c10)+(0,0.06)$)--($(a\c10)-(0,0.06)$);}
        {\node[draw,thick,circle,inner sep=1]
        (m\c10) at (a\c10)
        {};}
        \ifthenelse{1<\c \and \c<10}
        {\draw[thick] ($(a\c20)+(0,0.06)$)--($(a\c20)-(0,0.06)$);}
        {\node[draw,thick,circle,inner sep=1]
        (m\c20) at (a\c20)
        {};}
        \ifthenelse{2<\c \and \c<10}
        {\draw[thick] ($(a\c30)+(0,0.06)$)--($(a\c30)-(0,0.06)$);}
        {\node[draw,thick,circle,inner sep=1]
        (m\c30) at (a\c30)
        {};}
        \ifthenelse{3<\c \and \c<9}
        {\draw[thick] ($(a\c21)+(0,0.06)$)--($(a\c21)-(0,0.06)$);}
        {\node[draw,thick,circle,inner sep=1]
        (m\c21) at (a\c21)
        {};}
        \ifthenelse{4<\c \and \c<9}
        {\draw[thick] ($(a\c31)+(0,0.06)$)--($(a\c31)-(0,0.06)$);}
        {\node[draw,thick,circle,inner sep=1]
        (m\c31) at (a\c31)
        {};}
        \ifthenelse{5<\c \and \c<8}
        {\draw[thick] ($(a\c32)+(0,0.06)$)--($(a\c32)-(0,0.06)$);}
        {\node[draw,thick,circle,inner sep=1]
        (m\c32) at (a\c32)
        {};}
      }
     \node[](j0) at (i0) {\scriptsize $(4,6)$};
     \node[](j0) at (i1) {\scriptsize $(4,5)$};
     \node[](j0) at (i2) {\scriptsize $(4,4)$};
     \node[](j0) at (i3) {\scriptsize $(4,3)$};
     \node[](j0) at (i4) {\scriptsize $(4,2)$};
     \node[](j0) at (i5) {\scriptsize $(4,1)$};
     \node[](j0) at (i6) {\scriptsize $(4,0)$};
     \node[](j0) at (i7) {\scriptsize $(3,0)$};
     \node[](j0) at (i8) {\scriptsize $(2,0)$};
     \node[](j0) at (i9) {\scriptsize $(1,0)$};
     \node[](j0) at (i10) {\scriptsize $(0,0)$};
    \end{tikzpicture}
  \end{center}
Starting
with the identity matrix $D_4$,
we gradually add $1$'s,
reaching
the triangular matrix
$U_4$ in $g(4) = 6$ steps.
Then, we erase line after line,
reaching the null matrix $0_4$
in $4$ steps.

\bibliography{biblio}{}

\newpage
\appendix
\section{Known upper bounds for the Ramsey function}\label{subsec:comparison}
We detail the two main methods
used to bound the Ramsey function of a monoid
prior to this work,
and we show some cases in which the bounds obtained are unnecessarily large.

\subparagraph{Ramsey's Theorem}
Given two integers $c$ and $\vari$,
the \emph{multicolour Ramsey number} 
$\textsf{N}_{c}(\vari)$ is
the smallest integer such that
every colouring of a complete graph
on $\textsf{N}_{c}(\vari)$ vertices with $c$ colours
contains a monochromatic clique of size $\vari$.
Ramsey's Theorem~\cite{ramsey2009problem} proves the existence of the
Ramsey number $\textsf{N}_{c}(\vari)$ for every $c, \vari \in \mathbb{N}$.

These numbers can be used to bound
the Ramsey function
associated to a monoid $\calM$:
\begin{equation}\label{eq:RamNum}
\truevalRam{\calM}{\vari} \leq \textsf{N}_{|\calM|}(\vari+1)-1
\textup{ for all } k \geq 2.
\end{equation}
This is proved as follows.
For every word
$u = m_1m_2 \ldots m_n \in \calM^*$
of length $\textsf{N}_{|\calM|}(\vari+1)-1$,
we consider the complete graph 
on $\textsf{N}_{|\calM|}(\vari+1)$ vertices,
and we colour its edges with
the elements of $\calM$
as follows:
for all $1 \leq i < j \leq \textsf{N}_{|\calM|}(\vari)$,
the edge between vertices $i$ and $j$
is coloured with the element
$m_{i} \cdot m_{i+1} \cdot
\ldots \cdot m_{j-1} \in \calM$.
By definition of the Ramsey numbers,
$G$ contains a monochromatic clique of size $\vari+1$,
which corresponds exactly to a Ramsey $\vari$-decomposition of $u$ whenever $\vari \geq 2$.

It is known that $\textsf{N}_{c}(\vari) > 2^{\frac{c\vari}{4}}$
for all $c,\vari \in \mathbb{N}$ (see \cite{lefmann1987note}).
Since $\truevalRam{\calM}{\vari} \leq (\vari|\calM|^{4})^{\reglength{\calM}}$
by Theorem \ref{theorem:general_valram},
the bound \eqref{eq:RamNum} lacks precision 
for large $k$'s,
and also for every monoid $\calM$ with a size $|\calM|$
substantially larger than its regular $\calD$-length
$\reglength{\calM}$.
For instance,
for the full transformation monoid $\transformation{n}$ over $n$ elements 
(see Section \ref{sec:regLength}),
it is exponentially too large:
\[
\begin{array}{ll}
\truevalRam{\transformation{n}}{\vari} \leq (\vari|\transformation{n}|^{4})^{\reglength{\transformation{n}}}
= \vari^{n+1}(n+1)^{4n(n+1)};\\
\textsf{N}_{|\transformation{n}|}(\vari+1) - 1 \geq 2^{\frac{(\vari+1) |\transformation{n}|}{4}}
= 2^{\frac{(\vari+1) (n+1)^n}{4}}.
\end{array}
\]

\subparagraph{Factorisation Forest Theorem}
A \emph{Ramsey factorisation tree} of a word
$u = m_1 \ldots m_n \in \calM^*$
is a directed tree $T$ whose vertices are labelled with
non-empty words of $\calM^+$
such that:
\begin{itemize}[nolistsep]
    \item 
    the root is labelled with $u$;
    \item 
    the leaves are labelled with
    the letters $m_1,m_2,\ldots,m_n$ composing $u$;
    \item 
    each branching is a Ramsey decomposition
    of the parent's label in the children's labels.
\end{itemize}
The Factorisation Forest Theorem states that
for every finite monoid $\calM$
there is a bound $\textsf{F}(\calM)$ such that
each sequence $u \in \calM^*$
has a Ramsey factorisation tree
of height at most $\textsf{F}(\calM)$.
The theorem was initially proved in~\cite{Simon90},
and the bound has been improved afterwards~\cite{ChalopinL04,Kufleitner08,Colcombet2012}.
This result can be used to bound the Ramsey function
associated to $\calM$:
\begin{equation}\label{eq:FacFor}
\truevalRam{\calM}{\vari} \leq (\vari+1)^{\textsf{F}(\calM)}+1
\textup{ for all } k \in \mathbb{N}.
\end{equation}
This is proved as follows.
Given a word $u \in \calM^*$
of length $(\vari+1)^{\textsf{F}(\calM)}+1$,
we consider its Ramsey factorisation tree
of height at most $\textsf{F}(\calM)$.
Since this tree has more than
$(\vari+1)^{\textsf{F}(\calM)}+1$ leaves,
it necessarily contains a branching of size
at least $\vari + 2$.
This yields a Ramsey $\vari$-decomposition of 
a factor of $u$, that can be transferred back to $u$.


It is known that ${\textsf{F}(\calG)} = |\calG|$ for every group $\calG$
(see \cite{Kufleitner08, Colcombet2012},
note that in \cite{Kufleitner08}
this result is proved for $3|\calG|$ instead of $|\calG|$:
the factor $3$ stems from a slightly different definition of $\textsf{F}(\calG)$).
As a consequence, 
since $\truevalRam{\calM}{\vari} \leq (\vari|\calM|^{4})^{\reglength{\calM}}$
by Theorem \ref{theorem:general_valram},
the bound \eqref{eq:FacFor}
lacks precision for every monoid $\calM$
containing a group $\calG$ substantially larger than its regular $\calD$-length
$\reglength{\calM}$.
For instance, as the full permutation group $P_n$ over $n$ elements is embedded into
the full transformation monoid $\transformation{n}$ (see Section \ref{sec:regLength}),
the bound \eqref{eq:FacFor} is exponentially too large:
\[
\begin{array}{ll}
\truevalRam{\transformation{n}}{\vari} \leq (\vari|\transformation{n}|^{4})^{\reglength{\transformation{n}}}
= \vari^{n+1}(n+1)^{4n(n+1)};\\
(\vari+1)^{\textsf{F}(\transformation{n})} +1
\geq (\vari+1)^{\textsf{F}(P_n)} +1
\geq (\vari+1)^{|P_n|} +1
=  (\vari+1)^{n!} +1.
\end{array}
\]

\section{
Defining the regular $\calD$-length
using Green's relations}\label{app:Green}

We state basic definitions and lemmas
concerning Green's relations of finite monoids.
More details can be found
in~\cite{pin2010mathematical}
(note that here we write $\calD$ instead of $\calJ$).

The preorders $\leq_{\calH}$
and $\leq_{\calD}$ over a finite monoid $\calM$
are defined as follows:
\[
\begin{array}{ll}
m \leq_{\calH} m'
\textup{ if }
s \cdot m' = m = m' \cdot t
\textup{ for some }
s,t \in \calM;\\
m \leq_{\calD} m'
\textup{ if }
m = s \cdot m' \cdot t
\textup{ for some }
s,t \in \calM.\\
\end{array}
\]
An $\calH$-class of $\calM$
is an equivalence class of the equivalence
relation $\sim_\calH$
generated by $\leq_\calH$.
Similarly, a $\calD$-class
is an equivalence class of the equivalence
relation $\sim_\calD$
generated by $\leq_\calD$.
A $\calD$-class is called regular if it contains
at least one idempotent element of $\calD$.
We denote by $\calD(m)$ the $\calD$-class of
an element $m \in \calM$.
We use the two following lemmas.

\begin{lemma}\label{lemma_Rclass}
Let $m,m'\in \calM$
be two elements of the same $\calD$-class.
If $s \cdot m' = m = m' \cdot t$
for some $s,t \in \calM$,
then $m \sim_{\calH} m'$.
\end{lemma}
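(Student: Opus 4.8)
The hypothesis $s \cdot m' = m = m' \cdot t$ is, by definition, exactly the statement $m \leq_{\calH} m'$. Since $\sim_{\calH}$ is the equivalence relation generated by the preorder $\leq_{\calH}$, establishing $m \sim_{\calH} m'$ reduces to proving the reverse inequality $m' \leq_{\calH} m$, i.e. to producing $s', t' \in \calM$ with $s' \cdot m = m' = m \cdot t'$. The only hypothesis not yet used is that $m$ and $m'$ share a $\calD$-class, which in particular gives $m' \leq_{\calD} m$ and hence lets me fix $p, q \in \calM$ with $m' = p \cdot m \cdot q$. The plan is to feed the two given identities into this factorisation and then use finiteness, through idempotent powers, to cancel the spurious factors and recover $m'$ from $m$ on each side.

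For the left witness $s'$, I would substitute $m = s \cdot m'$ into $m' = p \cdot m \cdot q$ to obtain $m' = (p s) \cdot m' \cdot q$, and iterate this identity to get $m' = (p s)^{k} \cdot m' \cdot q^{k}$ for every $k \geq 1$. Choosing $k$ so that $e \coloneqq (ps)^{k}$ equals the idempotent power $\idempower{(ps)}$ and multiplying the relation $m' = e \cdot m' \cdot q^{k}$ on the left by $e$ gives $e \cdot m' = e^{2} \cdot m' \cdot q^{k} = e \cdot m' \cdot q^{k} = m'$; consequently $m' = e \cdot m' = (ps)^{k} \cdot m' = (ps)^{k-1} \cdot p \cdot (s \cdot m') = (ps)^{k-1} \cdot p \cdot m$, so I may take $s' = (ps)^{k-1} p$. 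The right witness $t'$ comes from the mirror computation: substituting $m = m' \cdot t$ yields $m' = p \cdot m' \cdot (t q)$ and hence $m' = p^{\ell} \cdot m' \cdot (t q)^{\ell}$; choosing $\ell$ with $f \coloneqq (tq)^{\ell} = \idempower{(tq)}$ idempotent and multiplying on the right shows $m' \cdot f = m'$, and rewriting $(tq)^{\ell} = t \cdot (qt)^{\ell-1} \cdot q$ turns $m' = m' \cdot (tq)^{\ell}$ into $m' = (m' \cdot t) \cdot (qt)^{\ell-1} \cdot q = m \cdot (qt)^{\ell-1} \cdot q$, so I may take $t' = (qt)^{\ell-1} q$. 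Together these give $s' \cdot m = m' = m \cdot t'$, that is $m' \leq_{\calH} m$, and therefore $m \sim_{\calH} m'$.

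The crux of the argument, and the only step beyond symbol manipulation, is the cancellation: the raw iterated identity $m' = (ps)^{k} m' q^{k}$ is uninformative on its own, and it is precisely the idempotency of $e = \idempower{(ps)}$ that lets me absorb the accumulated left factor and strip off a single copy of $ps$ to reconnect with $m$ (and symmetrically on the right). This is exactly the stability of finite monoids, so an alternative would be to cite stability and the induced collapse of Green's preorders directly from~\cite{pin2010mathematical}; I prefer the explicit idempotent-power computation above to keep the appendix self-contained. I would also note the harmless boundary cases $k = 1$ and $\ell = 1$, where the empty products $(ps)^{0}$ and $(qt)^{0}$ are read as the neutral element $1_{\calM} \in \calM$.
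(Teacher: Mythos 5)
Your argument is correct. Note, however, that the paper does not actually prove this lemma: it is stated in Appendix~\ref{app:Green} as a known fact about Green's relations, with a pointer to~\cite{pin2010mathematical}, so there is no in-paper proof to compare against. What you have written is a correct self-contained derivation of it, and it is essentially the standard proof that finite semigroups are \emph{stable} (if $m \sim_{\calD} m'$ and $m \leq_{\calR} m'$ then $m \sim_{\calR} m'$, and dually for $\calL$), specialised to the case where both one-sided inequalities hold at once. The key steps all check out: the hypothesis is literally $m \leq_{\calH} m'$ under the paper's definition; $m' = (ps)^{k} \cdot m' \cdot q^{k}$ follows by iterating the substituted factorisation; left-multiplying by the idempotent $e = \idempower{(ps)}$ does yield $e \cdot m' = e \cdot m' \cdot q^k = m'$, from which peeling off one $s$ recovers $m' = (ps)^{k-1} p \cdot m$; and the mirror computation with $(tq)^{\ell} = t(qt)^{\ell-1}q$ gives the right witness. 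One small reading you rely on, which is the intended one: the paper's phrase ``equivalence relation generated by $\leq_{\calD}$'' (resp.\ $\leq_{\calH}$) must be understood as mutual comparability, not the symmetric--transitive closure, so that sharing a $\calD$-class indeed gives $m' \leq_{\calD} m$; this matches standard usage and the way the paper applies these relations elsewhere. Your remark that one could instead cite stability directly is accurate; the explicit idempotent-power computation is the reasonable choice if self-containedness is the goal.
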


\begin{lemma}\label{lemma_Hclass_idem}
Each $\calH$-class of $\calM$ contains at most
one idempotent element.
\end{lemma}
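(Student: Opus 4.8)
The plan is to reduce the statement to a short computation with idempotents. Suppose $e,f \in \calM$ are two idempotent elements lying in a common $\calH$-class, so that $e \sim_\calH f$; the goal is to show $e = f$. The first step is to unpack membership in the same $\calH$-class into concrete algebraic identities. Reading $\sim_\calH$ as the equivalence attached to the preorder $\leq_\calH$, i.e. mutual $\leq_\calH$-comparability, I would record that $e \leq_\calH f$ and $f \leq_\calH e$; by the definition of $\leq_\calH$ this furnishes elements $s,t,s',t' \in \calM$ with
\[
s f = e = f t \quad\text{and}\quad s' e = f = e t'.
\]

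The second step is the core computation, which uses idempotency to collapse these relations into an equality. From $e = f t$ and $f \cdot f = f$ I get $f \cdot e = f \cdot (f t) = (f \cdot f) \cdot t = f t = e$. From $f = s' e$ and $e \cdot e = e$ I get $f \cdot e = (s' e) \cdot e = s' \cdot (e \cdot e) = s' e = f$. Comparing the two expressions for $f \cdot e$ yields $e = f$. (Symmetrically, one could instead use $e = s f$ and $f = e t'$ to evaluate $e \cdot f$ in two ways, obtaining $e \cdot f = e$ and $e \cdot f = f$.) Since any two idempotents of a single $\calH$-class must coincide, each $\calH$-class contains at most one idempotent element, as claimed.

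The only delicate point is the first step: passing from ``$e$ and $f$ lie in the same $\calH$-class'' to the one-sided identities $e = f t$ and $f = s' e$. I expect this to be the main obstacle, since it relies on interpreting $\sim_\calH$ as mutual comparability of the preorder $\leq_\calH$ (the standard $\calH$-relation for finite monoids, as in the cited reference \cite{pin2010mathematical}) rather than as some strictly larger relation; it is precisely this reading that makes both the left-sided and right-sided relations available at once and lets the idempotents act as one-sided identities on one another. Once these identities are secured, the remainder is the three-line manipulation above, requiring no case analysis.
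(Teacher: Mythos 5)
Your proof is correct. The paper itself does not prove Lemma~\ref{lemma_Hclass_idem}; it is stated as a known fact with a pointer to~\cite{pin2010mathematical}, and your two evaluations of $f \cdot e$ (giving $e$ via $e = f t$ and idempotency of $f$, and giving $f$ via $f = s' e$ and idempotency of $e$) constitute exactly the standard argument. The point you flag as delicate is not an issue here: although the paper phrases $\sim_\calH$ as the equivalence relation ``generated by'' $\leq_\calH$, the intended (and standard) meaning, consistent with the cited reference and with how Lemma~\ref{lemma_Rclass} is used in the appendix, is mutual comparability, which is precisely what your first step requires.
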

\noindent
We prove the equivalence of both definitions
of the regular $\calD$-length
stated in the introduction.

\begin{proposition}
The two following definitions are equivalent
\begin{itemize}[nolistsep]
    \item
    $\reglength{\calM}$
    is the size of the largest max monoid
    $\maxSemigroup{\reglength{\calM}}$
    embedded in $\calM$;
    \item
    $\reglength{\calM}$ is the size
    of the largest chain of regular $\calD$-classes
    of $\calM$.
\end{itemize}
\end{proposition}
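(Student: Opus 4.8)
The plan is to rephrase both quantities in terms of the \emph{natural order} on idempotents, defined by $f \le e$ iff $ef = fe = f$, and then to prove the two matching inequalities separately. The key observation is that a monomorphism $\phi : \maxSemigroup{\ell} \to \calM$ is exactly the same datum as a strictly descending chain of idempotents $1_\calM = e_1 > e_2 > \cdots > e_\ell$ for the natural order, via $e_i = \phi(i)$: indeed $\phi(\max(i,j)) = \phi(i)\phi(j)$ forces $e_i e_j = e_j e_i = e_{\max(i,j)}$, so for $i < j$ we get $e_j < e_i$, while conversely any such chain is automatically a subsemilattice isomorphic to $\maxSemigroup{\ell}$ (for a chain of idempotents one has $e_a e_b = e_{\max(a,b)}$). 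Since $1_\calM$ is the top element of the natural order, every natural-order chain may be assumed to start at $1_\calM$, matching the requirement $\phi(1)=1_\calM$. Dually, $\calD(1_\calM)$ is the greatest $\calD$-class and is regular, so a longest chain of regular $\calD$-classes may be assumed to be topped by $\calD(1_\calM)$. With these normalisations the statement reduces to: the longest descending chain of idempotents equals the longest chain of regular $\calD$-classes.

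First I would prove that the first quantity is at most the second. Starting from a chain $1_\calM = e_1 > \cdots > e_\ell$, each $e_i$ is idempotent so each class $\calD(e_i)$ is regular, and $e_j = e_i e_j$ witnesses $e_j \le_\calD e_i$; it remains to see the classes are pairwise distinct. If $\calD(e_i) = \calD(e_j)$ with $i < j$, then from $e_j = e_j e_i = e_i e_j$ Lemma~\ref{lemma_Rclass} gives $e_i \sim_\calH e_j$, and Lemma~\ref{lemma_Hclass_idem} then forces $e_i = e_j$, a contradiction. Hence $\calD(e_1) >_\calD \cdots >_\calD \calD(e_\ell)$ is a chain of $\ell$ regular $\calD$-classes.

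The substantial direction is the converse, for which I would build a descending idempotent chain out of a chain of regular $\calD$-classes $D_1 >_\calD \cdots >_\calD D_\ell$ (with $D_1 = \calD(1_\calM)$) by induction, setting $e_1 = 1_\calM$ and, given $e_i \in D_i$, producing an idempotent $e_{i+1} \in D_{i+1}$ with $e_{i+1} < e_i$. The whole argument therefore hinges on the following sub-lemma, which I expect to be the main obstacle: for any idempotent $e$ and any regular $\calD$-class $D' <_\calD \calD(e)$, there is an idempotent of $D'$ lying below $e$ in the natural order. To prove it I would pick an idempotent $g \in D'$ and, using $g \le_\calD e$, write $g = a e b$; setting $u = g a e$ and $v = e b g$ one computes $uv = g$, and then the companion element $w = vu$ is the desired idempotent. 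The three facts to check are that $w$ is idempotent (from $w^2 = v(uv)u = vgu = vu = w$, using $vg = v$ and $gu = u$), that $w \in e\calM e$ and hence $ew = w = we$ so that $w \le e$, and that $w \mathrel{\calD} g$ (since $g \mathrel{\calR} u \mathrel{\calL} w$), which places $w$ in $D'$ and, $\calD(w)$ being distinct from $\calD(e)$, guarantees $w \ne e$. Transitivity of the natural order then assembles the $e_i$ into a single descending chain, yielding a monomorphism $\maxSemigroup{\ell} \to \calM$ and completing the equality.
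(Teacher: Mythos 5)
Your proof is correct and follows essentially the same route as the paper: the first direction is verbatim the paper's argument (same use of Lemmas~\ref{lemma_Rclass} and \ref{lemma_Hclass_idem}), and in the second direction your idempotent $w = vu = e\,b\,g\,a\,e$ is literally the paper's element $e_i \cdot t \cdot f_{i+1} \cdot s \cdot e_i$, merely verified via the relations $\calR$ and $\calL$ instead of by direct computation. The repackaging through the natural order on idempotents and the explicit normalisation at $1_\calM$ are presentational refinements rather than a different proof.
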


\begin{proof}
Transforming a monomorphism
$\phi: \maxSemigroup{n} \rightarrow \calM$
into a chain of regular $\calD$-classes is easy.
Remark that, by definition, the order $<_{\calD}$ over $\maxSemigroup{n}$
is the inverse of the usual order:
we have
$n
<_{\calD}
n-1
<_{\calD} \ldots <_{\calD}
1$.
We show that
$\calD(\phi(n))
<_{\calD}
\calD(\phi(n-1))
<_{\calD} \ldots <_{\calD}
\calD(\phi(1))$.
First, for every $1 \leq i < j \leq n$,
$\phi(j) \leq_{\calD} \phi(i)$
since $\phi(i) \cdot \phi(j) = \phi(j)$.
Moreover, $\phi(i)$ and $\phi(j)$ are not in the
same $\calD$-class,
as the fact that
$\phi(i) \phi(j) = \phi(j) = \phi(j) \phi(i)$
would imply that $\phi(i)$ and $\phi(j)$
are in the same
$\calH$-class by Lemma
\ref{lemma_Rclass}.
Then $\phi(i)$ and $\phi(j)$ would be
equal by Lemma 
\ref{lemma_Hclass_idem},
which contradicts the fact that $\phi$ is injective.

Transforming a chain of $n$ regular $\calD$-classes
into a monomorphism
$\phi: \maxSemigroup{n} \rightarrow \calM$
requires a bit of work:
let
${D_n} <_{\calD} {D_{n-1}}
<_{\calD} \ldots
<_{\calD} {D_1}$
be a chain of regular $\calD$-classes.
For every $1 \leq i \leq n$,
as the $\calD$-class ${D_i}$
is regular, it contains
at least one idempotent element $f_i$.
Unfortunately, there is no guarantee
that the elements $f_i$
form a submonoid of $\calM$.
We now show how
to transform each $f_i$
into an idempotent element $e_i$
satisfying $e_i \sim_\calD f_i$
such that the function
$\phi: \maxSemigroup{n} \rightarrow \calM$
mapping $i$ to $e_i$ is a monomorphism.
First, we set $e_1 = f_1$.
Then for every $1<i\leq n$
we construct
$e_{i+1}$ based on $e_i$.
Since $f_{i+1} <_{\calD} f_i \sim_{\calD} e_i$,
there exists $s,t \in \calM$ such that
$f_{i+1} = s \cdot e_i \cdot t$.
We prove that setting
$e_{i+1} =
e_i \cdot t \cdot f_{i+1} \cdot s \cdot e_i$
satisfies the desired properties.
First, $e_{i+1}$ is an idempotent element of $\calM$ since
\begin{equation}\label{equ:JchainHAHA}
e_{i+1} \cdot e_{i+1}
= e_i \cdot t \cdot f_{i+1} \cdot s \cdot e_i \cdot
e_i \cdot t \cdot f_{i+1} \cdot s \cdot e_i
= e_i \cdot t \cdot f_{i+1} \cdot s \cdot e_i
= e_{i+1}.
\end{equation}
Moreover, $e_{i+1} \sim_{\calD} f_{i+1}$
since $e_{i+1} \leq_{\calD} f_{i+1}$
by definition,
and
\begin{equation}\label{equ:JchainHAHAHA}
s \cdot e_{i+1} \cdot t
= s \cdot e_i \cdot t \cdot f_{i+1} \cdot s \cdot e_i \cdot t
= f_{i+1} \cdot f_{i+1} \cdot f_{i+1}
= f_{i+1}.
\end{equation}
Finally,
$e_{i+1} \cdot e_i = e_{i+1} = e_i \cdot e_{i+1}$ as
\begin{equation}\label{equ:Jchain}
e_{i+1} \cdot e_{i}
= e_i \cdot t \cdot f_{i+1} \cdot s \cdot e_i \cdot e_i
= e_{i+1}
= e_i \cdot e_i \cdot t \cdot f_{i+1} \cdot s \cdot e_i
= e_i \cdot e_{i+1}.
\end{equation}
As a consequence, the function
$\phi: \maxSemigroup{n} \rightarrow \calM$ mapping 
each $1 \leq i \leq n$ to $e_i$ is a monomorphism:
it is a homomorphism by
Equations \ref{equ:JchainHAHA} and \ref{equ:Jchain},
and it is injective since
the elements of its image are all in distinct $\calD$-classes by 
Equation \ref{equ:JchainHAHAHA}.
\end{proof}

\section{Properties
of idempotent Boolean matrices}\label{app:mat}
We prove the technical properties
of idempotent Boolean matrices
stated in Section \ref{sec:reglength}.

\begin{lemma}\label{lemma:characterisation_idempotent}
Every idempotent matrix is stable.
\end{lemma}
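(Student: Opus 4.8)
The plan is to prove that every idempotent matrix $A \in \relation{n}$ is stable, meaning that whenever $A_{ik} = 1$, there exists some $j$ with $A_{ij} = A_{jj} = A_{jk} = 1$. The key fact to exploit is simply that $A = A \cdot A = A \cdot A \cdot A = \ldots$, so $A$ equals any of its powers. In particular, I would work with the identity $A = A \cdot A \cdot A$, which expresses each component $A_{ik}$ as a threefold Boolean product.

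First I would unfold the definition of matrix composition for the cube. If $A_{ik} = 1$ and $A = A^3$, then $(A \cdot A \cdot A)_{ik} = 1$, which by definition of Boolean matrix multiplication means there exist indices $p, q \in \firstintegers{n}$ such that $A_{ip} = A_{pq} = A_{qk} = 1$. This already gives me three $1$'s forming a path $i \to p \to q \to k$, but it does not immediately hand me a single diagonal element $j$ with $A_{jj} = 1$ sitting on a length-two path from $i$ to $k$.

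The main obstacle, then, is producing a fixed point $j$ with $A_{jj} = 1$ rather than just an intermediate vertex. The natural fix is to iterate further: since $A$ equals all its powers, I can write $A = A^m$ for arbitrarily large $m$ and extract a long path $i = i_0 \to i_1 \to \cdots \to i_m = k$ of $1$-entries. By the pigeonhole principle, once $m > n$ some index repeats, say $i_a = i_b$ with $a < b$; collapsing the cycle between them shows there is an index $j = i_a$ reachable from $i$ and reaching $k$ via $1$-entries, and moreover the cycle $i_a \to i_{a+1} \to \cdots \to i_b = i_a$ together with idempotency forces $A_{jj} = 1$. More cleanly, since $A_{i_a i_{a+1}} = \cdots = A_{i_{b-1} i_a} = 1$, composing these gives $(A^{b-a})_{jj} = 1$, and because $A^{b-a} = A$ (all powers of an idempotent coincide), we directly obtain $A_{jj} = 1$.

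Thus I would conclude by verifying the three required relations for this $j$: the prefix path $i \to \cdots \to j$ collapses under $A = A^{a}$ to give $A_{ij} = 1$, the loop gives $A_{jj} = 1$ as above, and the suffix path $j \to \cdots \to k$ collapses under $A = A^{m-b}$ to give $A_{jk} = 1$. I expect the cleanest write-up to avoid explicit path-chasing by instead arguing algebraically: choosing $j$ to lie in the image structure of $A$ where $A_{jj} = 1$ is guaranteed, and then using $A = A \cdot A$ twice to route $i$ to $j$ and $j$ to $k$. The only real subtlety is ensuring the repeated index genuinely yields a diagonal $1$, which idempotency guarantees precisely because $A$ absorbs all its own powers.
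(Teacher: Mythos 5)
Your argument is correct and is essentially the paper's proof in different clothing: both expand the single entry $A_{ik}=1$ via idempotency into a walk of length greater than $n$, pigeonhole a repeated index $j$, and use $A^p = A$ (for $p \geq 1$) to collapse the three segments into $A_{ij}=A_{jj}=A_{jk}=1$; the paper merely maintains the collapsed entries as invariants while constructing the walk one step at a time. The only nit is the degenerate cases $a=0$ or $b=m$, where $A^a$ or $A^{m-b}$ is the identity rather than $A$ — but there $j$ coincides with $i$ or $k$ and the diagonal entry $A_{jj}=1$ already supplies the missing relation.
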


\begin{proof}
Let $A \in \relation{n}$ be an idempotent matrix,
and let $A_{ik}$ be a component of $A$ equal to $1$.

We begin by defining inductively a sequence 
of $n+1$ elements
$j_0,j_1,\ldots,j_n \in \firstintegers{n}$
such that
\begin{enumerate*}[(a)]
\item
$A_{ij_s} =1$
for all $0 \leq s \leq n$,
\item
$A_{j_sj_{t}}=1$
for all $0 \leq t < s \leq n$.
\end{enumerate*}
First, setting $j_0 = k$ ensures that $A_{ij_0} =1$.
Now, let $0 \leq s < n$,
and suppose that $j_s$ satisfies
the desired properties.
Since $(A \cdot A)_{ij_s} = A_{ij_s}=1$,
there exists 
$j_{s+1} \in \firstintegers{n}$
satisfying 
$A_{ij_{s+1}}=A_{j_{s+1}j_{s}}=1$.
We immediately obtain that
$A_{ij_{s+1}}=1$.
Moreover, for every $t < s+1$,
either $t = s$ and
$A_{j_{s+1}j_{s}}=1$,
or $t<s$
and $A_{j_{s+1}j_{t}}=(A \cdot A)_{j_{s+1}j_{t}}=1$
since $A_{j_{s+1}j_{s}} = A_{j_sj_{t}} = 1$.

As all the $j_s$ are in $\firstintegers{n}$,
there exist two indices $0 \leq s < t \leq n$
satisfying $j_s = j_t$.
Then, setting $j=j_s=j_t$ yields
$A_{ij}=A_{ij_t}=1$,
$A_{jj} = A_{j_tj_s}=1$,
and
$A_{jk} = A_{j_sj_0} =1$.
This proves that the matrix $A$ is stable.
\end{proof}

\begin{lemma}
The positive sets of an idempotent Boolean matrix
are disjoint.
\end{lemma}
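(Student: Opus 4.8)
The plan is to show that any two positive sets sharing a common element must coincide, by arguing that their union is again a set on which all the relevant components of $A$ equal $1$, and then invoking maximality. Concretely, I would suppose that $I$ and $J$ are positive sets of the idempotent matrix $A \in \relation{n}$ with a common element $\ell \in I \cap J$, and aim to prove $I = J$; disjointness of distinct positive sets is the contrapositive of this.

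The heart of the argument is to establish that for every $i \in I$ and every $j \in J$ we have $A_{ij} = A_{ji} = 1$, and this is where idempotency enters. Since $i,\ell \in I$ and $\ell,j \in J$, the defining property of positive sets gives $A_{i\ell} = 1 = A_{\ell j}$. Expanding $A = A \cdot A$ and taking the intermediate index $\ell$ then yields $A_{ij} = (A \cdot A)_{ij} = 1$, because $A_{i\ell} = A_{\ell j} = 1$ witnesses the matrix product. The symmetric computation, using $A_{j\ell} = 1 = A_{\ell i}$, gives $A_{ji} = 1$. Thus the shared element $\ell$ acts as a bridge through which idempotency forces every cross-component between $I$ and $J$ to be $1$.

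With this in hand, every pair of elements of $I \cup J$ has its corresponding component equal to $1$: pairs lying inside $I$ or inside $J$ are handled directly by the definition of a positive set, and mixed pairs by the computation above. Hence $I \cup J$ satisfies the clique property that a positive set must maximise. Since $I$ is maximal with this property and $I \subseteq I \cup J$, we conclude $I \cup J = I$, so $J \subseteq I$; the symmetric argument gives $I \subseteq J$, whence $I = J$. I do not expect a serious obstacle here: the only subtlety is recognising that the common element lets idempotency promote all cross-components to $1$, after which maximality closes the argument immediately.
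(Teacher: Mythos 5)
Your proof is correct and follows essentially the same route as the paper: you use the shared element as the intermediate index in $A = A\cdot A$ to force all cross-components between $I$ and $J$ to be $1$, then invoke maximality to conclude $I=J$. The only difference is cosmetic — you spell out the maximality step via $I\cup J$ where the paper states it in one line.
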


\begin{proof}
Let $A \in \relation{n}$ be a Boolean matrix,
and let $I$ and $J$ be two positive sets of $A$.
To prove the statement, we show that if the intersection
of $I$ and $J$ is not empty, then they are equal.

Suppose that there exists $k \in I \cap J$.
Then for every $i \in I$, for every $j \in J$,
\[
\begin{array}{l}
A_{ij} = (A \cdot A)_{ij}
= 1 \textup{ since }
A_{ik} = A_{kj} = 1;\\
A_{ji} = (A \cdot A)_{ji}
= 1 \textup{ since }
A_{jk} = A_{ki} = 1.
\end{array}
\]
Since positive sets are maximal by definition,
this proves that $I = J$.
\end{proof}

\begin{lemma}
For every idempotent matrix
$A\in \relation{n}$,
the relation $\refrel_A$ is reflexive.
\end{lemma}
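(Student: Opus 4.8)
The plan is to unfold the definition of reflexivity for $\refrel_A$ and observe that it is an immediate consequence of idempotency. Fix an element $i \in \firstintegers{n}$; I would show that $i \refrel_A i$ holds. By the definition of the relation $\refrel_A$, this amounts to checking that for all $i_2, j_2 \in \firstintegers{n}$, the hypotheses $A_{i_2 i} = 1 = A_{i j_2}$ force $A_{i_2 j_2} = 1$.

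The key step is to rewrite the target component using idempotency. Since $A = A \cdot A$, we have $A_{i_2 j_2} = (A \cdot A)_{i_2 j_2}$, and by the definition of Boolean matrix composition this component equals $1$ as soon as there exists some $j \in \firstintegers{n}$ with $A_{i_2 j} = A_{j j_2} = 1$. The crucial observation is that $i$ itself serves as such a witnessing middle index: the hypothesis places $i$ on the right of $A_{i_2 i}$ and on the left of $A_{i j_2}$, so taking $j = i$ yields $A_{i_2 i} = A_{i j_2} = 1$, whence $(A \cdot A)_{i_2 j_2} = 1$ and therefore $A_{i_2 j_2} = 1$. As $i_2$ and $j_2$ were arbitrary, this establishes $i \refrel_A i$, and since $i$ was arbitrary, $\refrel_A$ is reflexive.

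There is essentially no obstacle here; the argument is a single application of $A = A \cdot A$. The only point requiring care is matching the index pattern in the definition of $\refrel_A$: the antecedent of the implication places the element $i$ in exactly the two positions needed for it to act as the connecting index in the product $A \cdot A$, so the reflexive instance $i \refrel_A i$ is precisely the one that is trivially witnessed by idempotency.
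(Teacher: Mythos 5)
Your proof is correct and is essentially identical to the paper's own argument: both unfold the definition of $i \refrel_A i$ and use $A = A \cdot A$ with $i$ as the witnessing middle index to conclude $A_{i_2 j_2} = 1$. The only difference is notational (the paper writes the fixed element as $j$ and the quantified pair as $i,k$).
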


\begin{proof}
Let $A \in \relation{n}$ be an idempotent matrix.
For every $j \in \firstintegers{n}$,
for every $i,k \in \firstintegers{n}$
satisfying $A_{ij} = 1 = A_{jk}$,
\[
A_{ik} = (A \cdot A)_{ik}
= 1 \textup{ since }
A_{ij} = A_{jk} = 1.
\]
Therefore $j \refrel_A j$, which proves
that $\refrel_A$ is reflexive.
\end{proof}

\begin{lemma}
For every idempotent matrix
$A\in \relation{n}$,
$A_{ij} = 1$ implies $i \refrel_A j$.
\end{lemma}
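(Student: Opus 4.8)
The plan is to exploit the idempotency of $A$, which yields $A = A \cdot A \cdot A$, together with the rule for Boolean matrix multiplication. Unwinding the definition of $\refrel_A$, I need to show that whenever $A_{ij} = 1$, every pair $i_2, j_2 \in \firstintegers{n}$ satisfying $A_{i_2 i} = 1 = A_{j j_2}$ also satisfies $A_{i_2 j_2} = 1$; establishing this for an arbitrary such pair gives $i \refrel_A j$.

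So first I would fix indices $i_2, j_2 \in \firstintegers{n}$ with $A_{i_2 i} = 1$ and $A_{j j_2} = 1$, and recall the hypothesis $A_{ij} = 1$. These three equalities together describe a length-three path $i_2 \to i \to j \to j_2$ through $A$, which is precisely a witness that the $(i_2, j_2)$ component of the triple product $A \cdot A \cdot A$ equals $1$: taking the intermediate indices to be $i$ and $j$, we have $A_{i_2 i} = A_{ij} = A_{j j_2} = 1$. Since $A$ is idempotent, $A \cdot A \cdot A = (A \cdot A) \cdot A = A \cdot A = A$, and therefore
\[
A_{i_2 j_2} = (A \cdot A \cdot A)_{i_2 j_2} = 1.
\]

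As this argument uses nothing about $i_2$ and $j_2$ beyond the assumed equalities, it holds for every admissible pair, which is exactly the defining condition for $i \refrel_A j$. There is no genuine obstacle here: the claim reduces to the single observation that an idempotent Boolean matrix absorbs its own powers, so a two-step detour $i_2 \to i \to j \to j_2$ collapses back to a direct edge $A_{i_2 j_2} = 1$.
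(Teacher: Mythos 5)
Your proof is correct and follows exactly the paper's argument: fix $i_2, j_2$ with $A_{i_2 i} = 1 = A_{j j_2}$, observe that the path through $i$ and $j$ witnesses $(A \cdot A \cdot A)_{i_2 j_2} = 1$, and use idempotency to collapse $A \cdot A \cdot A$ back to $A$. Nothing to add.
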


\begin{proof}
Let $A \in \relation{n}$ be an idempotent matrix,
and let us pick a component
$A_{ij}$ equal to $1$.
Then for every $i_2,j_2 \in \firstintegers{n}$
satisfying $A_{i_2i} = 1 = A_{jj_2}$,
\[
A_{i_2j_2} = (A \cdot A \cdot A)_{i_2j_2}
= 1 \textup{ since }
A_{i_2i} = A_{ij} = A_{jj_2} = 1.
\]
Therefore $i \refrel_A j$, which concludes the proof.
\end{proof}

\end{document}